\colorlet{mygreen}{green!50!black}
\colorlet{myblue}{green!20!blue}
\colorlet{myred}{red}
\colorlet{myorange}{orange!80!red}
\numberwithin{equation}{section}
\def\beq{\begin{equation}}
\def\eeq{\end{equation}}
\def\bea{\begin{eqnarray}}
\def\eea{\end{eqnarray}}
\def\bit{\begin{itemize}}
\def\eit{\end{itemize}}
\def\vp{\varphi}
\def\vp{\varphi}
\def\rI{{{\rm I}}}
\def\rL{{{\rm L}}}
\def\cJ{{\cal J}}
\def\cI{{\cal I}}
\def\cV{{\cal V}}
\def\md{{\mathfrak{d}}}
\def\mT{{\mathfrak{T}}}
\def\mJ{{\mathfrak{J}}}
\def\msl{{\mathfrak{sl}}}
\def\mL{{\mathfrak{L}}}
\def\mF{{\mathfrak{F}}}
\def\mH{{\mathfrak{H}}}
\def\mP{{\mathfrak{P}}}
\def\mTh{\hat{\mathfrak{T}}}
\def\ba{{\bf{a}}}
\def\bm{{\bf{m}}}
\def\br{{\bf{r}}}
\def\brr{{\bf{r}}_{-1}}
\def\bL{{\mathbf{\Lambda}}}
\def\bd{{\boldsymbol{\delta}}}
\def\bE{{\mathbf{1}}}
\def\bD{{\mathbf{3}}}
\def\bF{{\mathbf{5}}}
\def\Ae{{^{[1]\!}A}}
\def\Al{{^{[\ell]\!}A}}
\def\Bl{{^{[\ell]\!}B}}
\def\Az{{^{[2]\!\!}A}}
\def\Bz{{^{[2]\!}B}}
\def\Ad{{^{[3]\!}A}}
\def\Bd{{^{[3]\!}B}}
\def\Av{{^{[4]\!}A}}
\def\Bv{{^{[4]\!}B}}
\def\Aff{{A_1^{(1)}}}
\def\Vir{{\rm Vir}}
\def\Ch{{\rm Ch}\,}
\newcommand{\cLt}[2]{{}^{[#1]}\!\mathfrak{L}_{#2}^{\text{tower}}}
\newcommand{\cL}[2]{{}^{[#1]}\!\mathfrak{L}_{#2}^{\text{coset}}}
\newcommand{\sL}[2]{{}^{[#1]}\!\mathcal{L}_{#2}^{\text{sug}}}
\def\E10{{{\rm E}_{10}}}
\def\DD{{ {}_*^*}}
\newtheorem{theorem}{Theorem}
\newtheorem{corollary}{Corollary}
\newcommand{\eprint}[1]{{\href{http://arxiv.org/abs/#1}{\texttt{arXiv:#1}}}}
\newcommand{\eprintm}[1]{{\href{http://arxiv.org/abs/#1}{\texttt{arXiv:#1}}}}
\newcommand{\eprintN}[1]{{\href{http://arxiv.org/abs/#1}{\texttt{arXiv:#1 [hep-th]}}}}
\newcommand{\eprintRT}[1]{{\href{http://arxiv.org/abs/#1}{\texttt{arXiv:#1 [math.RT]}}}}
\begin{document}
\begin{titlepage}
\setcounter{page}{0}
\begin{center}
\vspace*{1.5cm}
\textbf{\LARGE From Tensor Algebras to \\[3mm]
Hyperbolic Kac-Moody Algebras}\\[2ex]
\vspace{2cm}
\textsc{\Large Axel Kleinschmidt$^1$, Hannes Malcha$^{1,2}$}\\[1ex]
\textsc{\Large and Hermann Nicolai$^1$}\\
\vspace{1cm}
${}^1$\textit{Max Planck Institute for Gravitational Physics (Albert Einstein Institute),}\\
\textit{14476 Potsdam, Germany}\\[2ex]
${}^2$\textit{Institute for Theoretical Physics, ETH Z\"urich, 8093 Z\"urich, Switzerland}\\
\vspace{2cm}
\textbf{Abstract}
\end{center}
\small{
\noindent
We propose a novel approach to study hyperbolic Kac-Moody algebras,
and more specifically, the Feingold-Frenkel algebra $\mF$, which is based
on considering the tensor algebra of level-one states before descending to
the Lie algebra by converting tensor products into multiple commutators. This
method enables us to exploit the presence of mutually commuting coset
Virasoro algebras, whose number grows without bound with increasing affine
level. We present the complete decomposition of the tensor algebra under the
affine and coset Virasoro symmetries for all levels $\ell\leq 5$, as well as the
maximal tensor ground states from which all elements of $\mF$ up to level five
can be (redundantly) generated by the joint action of the affine and coset 
Virasoro generators, and subsequent conversion to multi-commutators,
which are then expressed in terms of transversal and longitudinal DDF states.
We outline novel directions for future work.
}
\vspace{\fill}
\end{titlepage}

\tableofcontents

\newpage
\section{Introduction}
\label{sec:1}

In this paper, we continue our study \cite{CKMN} of the Feingold-Frenkel algebra
$\mF$ \cite{FF}. This is a hyperbolic Kac-Moody algebra (KMA) based on 
the indefinite rank-three Cartan matrix 
\beq\label{eq:Aij}
(A_{ij}) \,\equiv \, \br_i\cdot \br_j\,=\, 
\begin{pmatrix} 2 &-1 & 0 \\
-1 & 2 & -2\\
0 & -2 & 2\\
\end{pmatrix} 
\hspace{30mm}
\begin{picture}(100,20)
\thicklines
\put(10,5){\line(1,0){40}}
\put(50,10){\line(1,0){40}}
\put(50,0){\line(1,0){40}}
\put(55,5){\line(1,1){10}}
\put(55,5){\line(1,-1){10}}
\put(85,5){\line(-1,1){10}}
\put(85,5){\line(-1,-1){10}}
\put(10,5){\circle*{10}}
\put(50,5){\circle*{10}}
\put(90,5){\circle*{10}}
\put(3,-12){$-1$}
\put(47,-12){$0$}
\put(88,-12){$1$}
\end{picture}
\,,
\eeq 
where $\br_i \in \{ \brr, \br_0,\br_1\}$ are the simple roots and we have also 
shown the Dynkin diagram (see \cite{Kac} for an introduction to the theory
of KMAs). The algebra $\mF$ is the smallest hyperbolic KMA which admits
an affine  null root, $\bd = \br_0 \,+\, \br_1$, hence an affine subalgebra 
$\Aff$, the untwisted affine extension of $\mathfrak{sl}(2)$. The presence of
such a null root is an absolutely essential ingredient in our construction (as
for  instance DDF operators could not even be defined without it). 
The basic conventions and notation as well as the basic definitions 
and results relevant to the study of $\mF$ are explained at
length in \cite{CKMN}. Here we summarize only some essential 
features, and refer to that paper as well as to the seminal work of 
\cite{FF} for further details.

A key property of $\mF$ (as for every KMA) is that it can 
be written as a graded direct sum~\cite{FF}
\beq\label{eq:F}
\mathfrak{F} \,=\, \bigoplus_{\ell \in \mathbb{Z}} \, \mathfrak{F}^{(\ell)} 
\,=\, \mathfrak{F}_- \oplus \mathfrak{F}^{(0)} \oplus \mathfrak{F}_+
\eeq
in an affine level expansion with respect to its affine subalgebra 
$\Aff \equiv \mF^{(0)}$, where $\ell\in\mathbb{Z}$ 
are the eigenvalues of the central charge operator of $\Aff$.
In terms of affine representations,
the level-one sector $\mathfrak{F}^{(1)}$ 
is the highest weight basic representation of $\Aff$, and $\mF^{(-1)}$ 
is its conjugate lowest weight representation.
The full algebra $\mF$ is then generated by multiply commuting $\mF^{(1)}$,
and $\mF^{(-1)}$, respectively, to obtain $\mF_+$ and $\mF_-$.
Equivalently, for $\ell\geq 2$ the level-$\ell$ sector of $\mF$ 
in \eqref{eq:F} can be recursively generated from preceding levels by
\beq\label{eq:F1Fell}
\mathfrak{F}^{(\ell)} \,=\, \left[ \mathfrak{F}^{(1)} \,,\,
\mathfrak{F}^{(\ell-1)} \right]
\eeq
(idem for $\ell \leq -2$ and all negative levels); a proof of
this statement can for instance be found in appendix A of \cite{CKMN}.
Accordingly, in \cite{CKMN} we have defined the maps 
$\mathcal{I}_\ell \colon \mathfrak{F}^{(1)} 
\otimes \mathfrak{F}^{(\ell-1)} \to \, \mathfrak{F}^{(\ell)}$ by
\beq\label{eq:Iell}
\mathcal{I}_\ell (u \otimes v) \coloneqq [u,v]
\eeq
for $\ell\geq 2$, with $u \in \mathfrak{F}^{(1)}$ and 
$v \in \mathfrak{F}^{(\ell-1)}$,
such that we have the vector space isomorphism
\beq\label{eq:FKerI}
\mF^{(\ell)} \,\cong \, \mF^{(1)} \otimes \mF^{(\ell -1)} 
\Big/ \, {\rm Ker}\; \cI_{\ell} \, .
\eeq
This iterative construction of $\mF$ thus relies on obtaining the level-$\ell$ 
sector from the preceding level-$(\ell -1)$ sector of the algebra, and requires
the determination of the kernel ${\rm Ker}\;\cI_\ell$ at each step, and
becomes more and more cumbersome with increasing level. The
crucial fact for our construction is now that at each such 
step the tensor product of affine representations is accompanied by 
a new coset Virasoro algebra \cite{GKO,GO,CKMN} whose presence can 
be exploited for the evaluation of the relevant product of affine 
representations. However, when converting the product 
$\mF^{(1)} \otimes \mF^{(\ell -1)}$ to $\mF^{(\ell)}$ by taking 
the quotient by ${\rm Ker}\;\cI_{\ell}$ the Virasoro module structure is lost: 
the level-$\ell$ sectors of $\mF$ are no longer Virasoro representations. 
This fact has so far prevented the full exploitation of these Virasoro
symmetries for a better understanding of hyperbolic KMAs.

In order to avoid this drawback and the loss of the Virasoro structure, 
we propose a different procedure in this paper, by keeping the tensor products 
of the level-one sectors as long as possible, and descending from the 
tensor products to the Lie algebra only in the very last step. 
This procedure has the advantage that we can fully exploit the simultaneous
presence of an increasing number of independent and commuting Virasoro
algebras. Our construction puts in evidence the rapidly increasing 
complexity of $\mF$, as the number of coset Virasoro representations 
grows without bound with the affine level and for $\mF$ involves 
{\em all} minimal ($c<1$) representations of the Virasoro algebra. 
While the action of each of these coset Virasoro algebras was exhibited in 
\cite{CKMN} at each step, proceeding from level-$(\ell -1)$ to level-$\ell$, the
main advance reported in this paper is thus {\em the simultaneous and mutually
commuting realization of all these coset Virasoro algebras} up to any given
level, see \eqref{eq:VirAff} below. This is a crucial step, because the action
of only a single coset Virasoro algebra per level is not enough to generate 
all states in $\mF$, as we already pointed out in section 6.2 of \cite{CKMN}.
Keeping the tensor products until the very last step introduces
a `multi-string flavor' into our construction, which is vaguely reminiscent
of an old proposal on the possible realization of KMAs
in string theory \cite{Witten}, even though the final step from
the tensor product will lead us back to the one-string Fock space
description, as we will explain in section~\ref{sec:4}.
The unbounded `pile-up' of independent Virasoro modules for
$\ell\rightarrow\infty$, which was already noticed in \cite{CKMN}, 
is one of the most intriguing features of our construction, and indicates
that in order to understand the global structure of $\mF$ we will eventually 
need to consider the action  of an $\infty$--fold product of Virasoro algebras.

To exhibit this structure
we first study the tensor algebra $\mT$ based on tensor products
of the basic representation which occurs at level one. For an explicit
and convenient description of the level-one states which constitute the
basic representation of $\Aff$ we make use of the DDF formalism \cite{DDF}
in the form developed in \cite{GN,GN1}. Only after analyzing the tensor products
we return to the Lie algebra in a second step by converting tensor 
products into multi-commutators. In this process numerous elements of the
tensor algebra are mapped to zero, whence the level-$\ell$ subspaces
 $\mF^{(\ell)}$ of the hyperbolic Lie algebra itself are no
longer representations of the relevant coset Virasoro algebras.

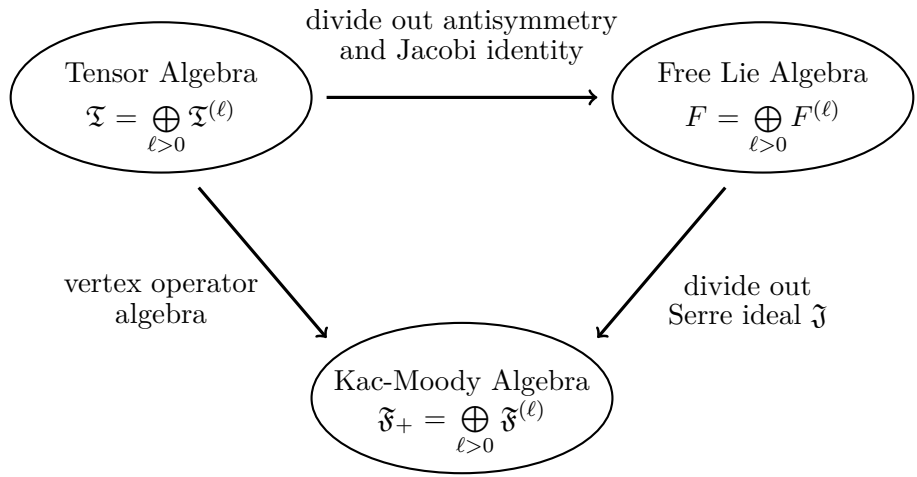
\begin{figure}[ht]
\begin{tikzpicture}
\draw [thick] (-4,0) ellipse (2 and 1);
\node at (-4,.3) {{Tensor Algebra}};
\node at (-4,-.4) {{$\mT \,=\, \bigoplus\limits_{\ell > 0} \mT^{(\ell)}$}};
\draw [thick] (4,0) ellipse (2 and 1);
\node at (4,.3) {{Free Lie Algebra}};
\node at (4,-.4) {{$F \,=\, \bigoplus\limits_{\ell > 0} F^{(\ell)}$}};
\draw [thick] (0,-4) ellipse (2 and 1);
\node at (0,-3.8) {{Kac-Moody Algebra}};
\node at (0,-4.4) {{$\mathfrak{F}_{+} \,=\, \bigoplus\limits_{\ell > 0} \, 
\mathfrak{F}^{(\ell)}$}};
\node at (0,1) {{divide out antisymmetry}};
\node at (0,.6) {{and Jacobi identity}};
\draw[->, very thick] (-1.8,0) to (1.8,0);
\node at (-4,-2.5) {{vertex operator}};
\node at (-4,-2.9) {{algebra}};
\draw[->, very thick] (-3.5,-1.2) to (-1.8,-3.2);
\node at (3.8,-2.5) {{divide out}};
\node at (3.8,-2.9) {{Serre ideal $\mJ$}};
\draw[->, very thick] (3.5,-1.2) to (1.8,-3.2);
\end{tikzpicture}
\caption{\label{fig:1}Different methods of obtaining (the positive half of) a 
hyperbolic Kac-Moody algebra.}
\end{figure}
It is instructive to compare our approach with the more 
traditional method of investigating hyperbolic KMAs 
\cite{FF,Kac,KMW,Kang1,Kang2,Kang3,BB}, which
we also review in section~\ref{sec:4.3}.
There one starts from the Free Lie Algebra, and then divides 
out level by level the relevant ideals associated to the Serre relations,
see fig. \ref{fig:1}. This approach becomes rather cumbersome already 
for very low levels ({\em to wit}, $\ell = 3,4,5$) and basically unmanageable 
for yet higher levels. By contrast, the present approach does not proceed 
in two steps, but goes directly from the tensor algebra of level-one 
elements to the KMA, rather than using the Free Lie Algebra as an 
intermediate construct. For this we make use of the vertex operator 
algebra (VOA) formalism \cite{Borcherds,IF,FLM} 
(see also \cite{GN,GN1}) to convert tensor products directly 
into multi-commutators. An important
advantage of the VOA formalism is that it takes care automatically of the 
Jacobi identities and the Serre relations. The main open problem is now
to combine the traditional study of $\mF$ with the DDF construction to find a
subset of the tensor algebra on which the map to $\mF$ is bijective. This would
lead to a full realization of the adjoint representation of $\mF$. 

We believe that our approach offers entirely new perspectives on the
study of hyperbolic KMAs, raising many new questions that can now be addressed:
\bit
\item
Some of the technical aspects of the present work that focusses on the 
hyperbolic Lie algebra $\mF$ are greatly facilitated by the fact 
that only minimal Virasoro models arise in the tensor algebra 
under consideration. This enables us to completely resolve the
tensor products into affine and minimal Virasoro representations for 
arbitrary levels, thanks to a key theorem of \cite{KW1}.
This is no longer the case for other hyperbolic Lie algebras, in particular  
for $\E10$ where only the three lowest levels obey $c_\ell < 1$.
While it is still true that the affine tensor products can be decomposed
into products of affine and Virasoro representations we lose
control over the $\mL_0$ eigenvalues for $c_\ell > 1$.
Hence $\E10$ as well as other higher rank hyperbolic KMAs
present qualitatively new complications beyond those 
of $\mF$ that will have to be addressed in future work.
\item As already pointed out in \cite{CKMN} the conversion of tensor
 products to Lie algebra elements leads to the `disappearance' of
 states, and hence to the loss of the Virasoro structure. As the 
 affine representations are not affected by this step, the problem
 of understanding $\mF$ now becomes one of elucidating the 
 `entanglement' of multiple Virasoro modules. Furthermore, the 
 conversion of tensor products to multi-commutators leads to the
 appearance of {\em longitudinal} DDF states, in accord with the fact
 that the string realization via the VOA construction is associated
 with a {\em sub-critical} bosonic string, as already pointed out
 in \cite{GN}. Understanding the full
 systematics of longitudinal states remains a problem for future study.
\item It has been known for a long time that affine representations of a given 
 level transform
 in a representation of the modular group \cite{Kac,KacPeterson}. 
 For $A_1^{(1)}$ there are 
 two representations at level one (related by an outer automorphism) that are 
 related by the modular group. However, only one of them appears in $\mF$. For 
 E$_8^{(1)}\subset {\rm E}_{10}$ there is only a single level-one representation 
 that is a singlet under the modular group. The extension of the modular group 
 action to the tensor algebras of the present paper and their fate when 
 descending to the hyperbolic KMAs is an interesting future avenue, as it
 will involve the simultaneous consideration of affine and multiple Virasoro
 modules. The connection to modular properties and Siegel modular 
 forms was a key motivation in the original work~\cite{FF}. The 
 decompositions \eqref{eq:T13} - \eqref{eq:T5} and their higher
 level generalizations suggest the existence of new Rogers-Ramanujan-like 
 identities that would follow from the results 
 in section~\ref{sec:3.1} and  their generalization to all
 levels.\footnote{We thank
 A.J. Feingold for pointing this out.}
\item Much of the previous mathematical literature has focused on the problem
 of determining root multiplicities, and finding appropriate special functions 
 with automorphic properties. Obtaining closed form expressions 
 for root multiplicities remains an outstanding problem, as there is so far
 not a single hyperbolic KMA for which the root multiplicities are known in
 closed form. Here, our approach to this problem offers a very different
 perspective, and furthermore allows for a much more explicit
 representation of the root space elements in terms of 
 (transversal and longitudinal) DDF states. The explicit
 expressions displayed in this paper and the presence of 
 `holes' appearing in the multiple Virasoro modules after 
 conversion of tensor products to multi-commutators suggest
 that the multiplicity formulas for levels $\ell\geq 3$ must have a more 
 complicated structure than the ones obtained so far (as is also apparent 
 from the results of \cite{Kang3,BB}). 
 \eit
 
 While these mathematical issues are of great importance in their own right,
 our main motivation continues to be the quest for a better understanding of
 the possible significance of $\mF$ and higher rank hyperbolic KMAs for 
 fundamental physics. Here we see at least three possible avenues for 
 further research and exploration. 
 
 First of all, the physical relevance of the affine subalgebra for General 
 Relativity has been known for a long time: $\Aff$ is known to be realized 
 as a generating symmetry (Geroch group) for axisymmetric
 stationary solutions of Einstein's equations in four dimensions (see 
 \cite{Julia,BM,Schladming} and references therein). This raises the question as 
 to the possible physical interpretation of the higher level sectors. Indeed the 
 search for {\em gauged} supergravities in two dimensions (a program that still
 needs to be completed) has revealed that the
 vector fields needed for the gauging belong to the basic representation
 of the relevant affine symmetry~\cite{Samtleben:2007an}, see 
also~\cite{Konig:2024psx} for a discussion in relation with supersymmetry.
Yet higher level representations might appear in connection with the tensor
hierarchy (see \cite{dWNS} for a review).
Related hints come from exceptional field 
theory~\cite{Bossard:2021jix,Bossard:2024gry}. 

Yet another perspective is furnished by a tensor hierarchy algebra construction 
that is 
likely to be relevant in a gauged supergravity or extended geometry context
with specific extensions to higher level affine representations that could occur 
in a supergravity context ~\cite{Cederwall:2021ymp}. 
The tensor hierarchy algebra constructed in~\cite{Cederwall:2021ymp} is 
different from $\mF$, but can also be partly obtained by quotienting a tensor 
algebra construction, such that the considerations of this paper could be 
relevant in this context as well.
The presence of an unbounded
number of Virasoro representations in the present approach
 may open new options for defining tensor hierarchies whose $p$-form 
 degrees of freedom are not limited by a finite number of space-time
dimensions, unlike for dimensionally reduced supergravities.

Thirdly, there exists a concrete proposal for M theory \cite{DHN0}, according to 
which space-time and concomitant field theoretic concepts emerge from 
a more fundamental `space-time-less' theory based on (the supersymmetric 
extension of) an $\E10/K(\E10)$ sigma model, where the space-time 
based dynamics is mapped to a null geodesic motion of a (spinning) point
particle 
on the $\E10/K(\E10)$ coset manifold. This proposal grew out of earlier
BKL-type investigations of the behavior of solutions of Einstein's equations
near a space-like (cosmological) singularity \cite{BKL}, and provides a more
systematic 
explanation for the appearance of chaotic metric oscillations {\em \`a la} BKL 
near the singularity. Namely, it can be shown that the metric oscillations can
be described in terms of a billiard that takes place in the fundamental
Weyl chamber of the hyperbolic KMA~\cite{Damour:2002et}. Our results
may also have implications for the quantum mechanical resolvability 
of the Big Bang singularity, see \cite{Nic} where it has been argued that 
the rapidly increasing complexity of expressions like those presented
in section~3.1 hints at an element of non-computability in
the approach to the singularity.

Let us finally remark that besides indefinite Kac-Moody algebras of hyperbolic
type
(like E$_{10}$ or $\mF$), Lorentzian algebras of {\em non}-hyperbolic type have
also been discussed 
as potential candidate symmetries of M theory or similar gravitational
theories~\cite{West:2001as,Cook:2013iya}. The rank-four Kac-Moody algebra $A_1^{+++}$ extending  the hyperbolic algebra $\mathcal{F}$ by one more node has been studied in particular in~\cite{Glennon:2020qpt,Boulanger:2022arw}.

\subsubsection*{Acknowledgements}
HM thanks the Max Planck Institute for Gravitational Physics in Potsdam as well
as the ETH Z\"urich for hospitality during various stages of this project. We
are grateful to A.J. Feingold and M.~Gaberdiel for discussions.

\section{Tensor algebra and coset Virasoro representations}
\label{sec:2}

The key idea of this paper is not to start with the KMA right away
but instead to analyze the tensor product of level-one states 
{\em before} converting the tensor products to multiple commutators.
The main reason for this is that the results involving coset Virasoro algebras
only hold for the tensor product, but are not directly valid for the Kac-Moody 
algebra whose level-$\ell$ sectors do not form representations of
the coset Virasoro algebra. 

For our DDF construction in particular, this has the advantage that we are
only working with multiple tensor products of level-one states, which are very
well understood in terms of transversal DDF states, and in particular do not
contain 
longitudinal DDF operators. The latter only appear when mapping the tensored 
DDF states from the tensor algebra to the Kac-Moody algebra.

\subsection{The tensor algebras \texorpdfstring{$\mT$}{T} and \texorpdfstring{$\mTh$}{That}}
\label{sec:2.1}

We define the tensor algebra 
$\mT = \bigoplus_{\ell \in \mathbb{N}} \mT^{(\ell)}$ generated by the 
level-one module $L \equiv L(\bL_0 + 2 \bd)$, {\em alias} the basic 
representation, which is a highest weight affine representation at 
level one in our conventions (see \eqref{eq:Lambda2} below for our 
nomenclature regarding weights). There is an analogous description
for the conjugate (negative level) representations.
The graded pieces of the tensor algebra are
\begin{align}\label{eq:mTn}
\begin{aligned}
&\mT^{(1)} \equiv \mF^{(1)} \,=\, L \, , \\
&\mT^{(2)} \,=\, L \otimes L \, , \\
&\mT^{(3)} \,=\, L \otimes L \otimes L \, , \\
&\mT^{(4)} \,=\, L \otimes L \otimes L \otimes L \, , \\
&\mT^{(5)} \,=\, \ldots
\end{aligned}
\end{align}
$\mT^{(\ell)}$ consists of level-$\ell$ affine representations by construction.
All our results in section~\ref{sec:2} will be valid for $\mT$, but for the
conversion to the KMA only the following subalgebra $\mTh\subset\mT$ will be 
relevant 
\begin{align}\label{eq:mTnh}
\begin{aligned}
&\mTh^{(1)} \equiv \mT^{(1)} \,=\, L \, , \\
&\mTh^{(2)} \,=\, L \wedge L \, , \\
&\mTh^{(3)} \,=\, L \otimes L \wedge L \, , \\
&\mTh^{(4)} \,=\, L \otimes L \otimes L \wedge L \, , \\
&\mTh^{(5)} \,=\, \ldots
\end{aligned}
\end{align}
because under the commutator map \eqref{eq:Iell} the symmetric product
$S^2(\mT^{(1)})\subset \mT^{(1)} \otimes \mT^{(1)}$ is mapped to zero and only the antisymmetric (wedge) product survives.\footnote{We use the notation 
$a \wedge b = a \otimes b - b \otimes a$.} Working with $\mTh$ rather than $\mT$
shortens some expressions.

For each level-$\ell$ affine representation there is an associated level-$\ell$
Sugawara Virasoro algebra
\cite{GO} whose action on an element $w\in\mT^{(\ell)}$ is defined by
\beq\label{eq:Sugawara}
\sL{\ell}{m} \,(w)\, 
\coloneqq \, \frac{1}{2(\ell + 2)} \sum_{n\in\mathbb{Z}} \DD T_{nA} T^A_{m-n} \DD \, (w) 
\,\equiv\, \frac{1}{2(\ell + 2)} \sum_{n\in\mathbb{Z}} \DD T_n T_{m-n} \DD \, (w)
\eeq
with the affine generators 
\beq\label{TEFH}
T_m^A  \;\in\; \big\{ E_m\,,\, F_m\,,\, H_m \big\} \; ,
\eeq
see \cite{CKMN} for more details (we often omit $\mathfrak{sl}(2)$ indices 
for simplicity); by $\DD \cdots\DD$ we designate the
usual normal ordering prescription. The level-$\ell$ Sugawara 
operators come with the central charge \cite{GO}
\beq
c_\ell^\mathrm{sug} \,=\, \frac{3\ell}{\ell+2} \, .
\eeq
To evaluate the action of $\sL{\ell}{m}$ on a tensor product 
$u_1\otimes\cdots\otimes u_\ell \in \mT^{(\ell)}$ we use
\begin{align}
\begin{aligned}\label{eq:Lsug}
\sL{\ell}{m} (u_1\otimes\cdots\otimes u_\ell) \,=\, \frac{1}{2(\ell + 2)} 
\bigg(&\sum_{i=1}^\ell \sum_{n\in\mathbb{Z}}
u_1 \otimes \cdots \otimes \DD T_nT_{m-n} \DD (u_i) 
\otimes \cdots \otimes u_{\ell} \\
&\,+\, \sum_{1\leq i \neq j \leq \ell} \ \sum_{n\in\mathbb{Z}} 
u_1\otimes \cdots \otimes T_n u_i \otimes\cdots\otimes T_{m-n} u_j \otimes
 \cdots \otimes u_{\ell} \bigg)
\end{aligned}
\end{align}
as follows directly from the distributivity of the affine generators 
$T_m$, {\em i.e.}
\beq
T_m (u \otimes v) \,=\, (T_m u) \otimes v \,+\, u \otimes (T_m v) \, .
\eeq
In particular, for the mixed terms we can drop the normal ordering symbols.

For each level we next introduce a coset Virasoro algebra whose 
action on the tensor product $u\otimes v$ with $u\in\mT^{(1)}$ and 
$v\equiv u_1\otimes\cdots\otimes u_{\ell -1} \in\mT^{(\ell -1)}$ is defined by 
\cite{GKO} 
\beq\label{eq:cosetVir}
\cL{\ell}{m} (u\otimes v) \,\coloneqq\, \sL{1}{m} u \otimes v 
\,+\, u \otimes \sL{\ell-1}{m} v
 \,-\, \sL{\ell}{m}(u\otimes v) \, .
\eeq
It is important that the definition \eqref{eq:cosetVir} does {\em not} extend to 
$\mF^{(\ell)}$ when tensor products are replaced by multiple commutators. The
main reason for this is that a commutator may vanish even though the tensor 
product does not, when $\cI_\ell(u\otimes v) = [u,v] =0$. This is the primary
reason why we first study the tensor product $\mT^{(\ell)}$ before proceeding to
$\mF^{(\ell)}$.

Moreover, \eqref{eq:cosetVir} allows for an immediate generalization to the 
action of all level $k \le \ell$ coset Virasoro algebras on a level-$\ell$ state 
$ u_1\otimes\cdots\otimes u_{\ell} \in\mT^{(\ell)}$ via
\beq\label{eq:cosetVir2}
\Big( \underbrace{\bE \otimes \cdots \otimes \bE}_{(\ell-k) \text{ times}} 
\otimes \cL{k}{m} \Big)
(u_1\otimes\cdots\otimes u_{\ell}) \,=\,
u_1 \otimes \cdots \otimes u_{\ell-k} \otimes \cL{k}{m}
\big(u_{\ell-k+1} \otimes (u_{\ell-k+2} \otimes \cdots \otimes u_{\ell})\big)
\, .
\eeq
The key property of the coset Virasoro operators \eqref{eq:cosetVir} and
\eqref{eq:cosetVir2} is that they commute with the affine action~\cite{GKO}
\beq\label{eq:LmT}
\Big[\,\cL{\ell}{m} \,,\, T_n^A \, \Big] \,=\, 0
\eeq
for each level (as follows directly from the distributivity of the 
affine action). The commutativity extends to `composite' operators
such as those appearing on the r.h.s. of \eqref{eq:Lsug}. The level-$\ell$ coset 
Virasoro algebra comes with the central charge of a minimal model \cite{GKO,CKMN}
\beq\label{eq:cell}
c_\ell^{\mathrm{coset}} \,=\, 1 - \frac{6}{(\ell+1)(\ell +2)} \, .
\eeq
Therefore the $\cL{\ell}{0}$-eigenvalues $h_{r,s}^{(\ell)}$ of the extremal
coset Virasoro states must belong to the set (see {\em e.g.}
\cite{DiFrancesco})\footnote{These coset Virasoro modules appearing in the
tensor product of affine highest weight representations count the
multiplicities of identical affine representation that are shifted from the
extremal one by \textit{negative} multiples of the null root $\bd$. As in our
convention the characters of Virasoro modules are $q$-series and $q=e^{-\bd}$,
this means that the extremal state in the Virasoro module is annihilated by
$\cL{\ell}{m}$ for $m>0$.}
\beq\label{eq:hrs}
h_{r,s}^{(\ell)} \,\in\, H^{(\ell)} \quad \mbox{with} \quad
H^{(\ell)} 
\coloneqq
\left\{ \frac{[(\ell+2)r-(\ell+1)s]^2-1}{4(\ell+1)(\ell+2)} \; \bigg|\;
r= 1 , \ldots , \ell; \ s = 1, \ldots, r \right\}\, .
\eeq
Eqn. \eqref{eq:cell} shows that {\em all} minimal representations 
will occur in the analysis of $\mF$.

Let us mention that instead of \eqref{eq:cosetVir} one could contemplate 
a more general definition of $\cL{\ell}{m}$
by grouping the tensor product $u_1\otimes\cdots\otimes u_\ell$ into two factors 
$ u \otimes v \equiv (u_1\otimes\cdots\otimes u_k) 
\otimes (u_{k+1} \otimes \cdots \otimes u_\ell)$ 
with $k>1$ and replacing the r.h.s. of \eqref{eq:cosetVir} by
$\sL{k}{m} u \otimes v \,+\, u \otimes \sL{\ell-k}{m} v \,-\, \sL{\ell}{m}(u\otimes v)$. 
The result will then depend on the choice of $k$ and imply central charge
values different from \eqref{eq:cell} (in particular also with values $c>1$, 
{\em e.g.} for $\ell=6$ and $k=2$).
However, in that case the analog of \eqref{eq:F1Fell} is no longer true as
$[\mF^{(k)} , \mF^{(\ell -k)}]$ in general is a {\em proper} subspace of 
$\mF^{(\ell)}$ for $k>1$, hence fails to capture all of $\mF^{(\ell)}$. We give 
an explicit example of this phenomenon in appendix \ref{app:example}, using DDF
states. For this reason we will only make use of `consecutive' coset Virasoro
algebras with $k=1$ as in \eqref{eq:cosetVir}.

The tensor product of the level-one and any level-$(\ell-1)$ affine module is
the direct sum of tensor products of minimal model Virasoro representations and
level-$\ell$ affine modules. An explicit formula for this tensor product was first
given in Theorem 4.1 of \cite{KW1}. Here we recall the result of \cite{KW1} in the 
notation of this paper and additionally introduce the appropriate $\bd$-shift on 
the right-hand side. Let
\beq\label{eq:Kmn}
K_{m,n} \,=\, \left\{ k \in \mathbb{Z} \ \bigg\vert \ 
- \frac{1}{2} (m+1) \le k \le n \right\}
\eeq
and for $k \in K_{m,n}$
\beq\label{eq:rs}
r_{m,n,k} = \begin{cases} 2n + 1 \quad &k \ge 0 \\ m+1 \quad &k < 0 \end{cases} \, , 
\hspace{1cm}
s_{m,n,k} = \begin{cases} 2n + 1 - 2k \quad &k \ge 0 \\ m+2 + 2k \quad &k < 0 \end{cases} \, .
\eeq
Then the product of the level-one affine module and any level-$(\ell-1)$ affine
module is given by
\begin{align}\label{eq:prodL}
\begin{aligned}
L(\bL_0 + 2\bd) &\otimes L(m \bL_0 + 2n \bL_1 + p \bd) \\[2mm]
&\,=\, \bigoplus_{k \in K_{m,n}} 
\Vir(c_{\ell}^{\mathrm{coset}}, h_{r,s}^{(\ell)}) 
\otimes L\Big((m+1+2k) \bL_0 + 2(n-k) \bL_1 + (p + 2 - k^2) \bd \Big) \, , 
\end{aligned}
\end{align}
where $\ell = m + 2n + 1$, $r \equiv r_{m,n,k}$, $s \equiv s_{m,n,k}$ and 
$h_{r,s}^{(\ell)}$ as in \eqref{eq:hrs}. See \cite{KW1} for a proof of this 
equation. The matching of the characters in \eqref{eq:prodL}, which fixes the
$\bd$-shifts, is shown explicitly in appendix \ref{app:delta}. It is important 
to point out that each of the $\lfloor \frac{\ell}{2} \rfloor +1$ different 
level-$\ell$ affine modules  appears exactly once on the right-hand side of 
\eqref{eq:prodL}. We recall that the (hyperbolic) fundamental weights of $\mF$ 
appearing in these expressions 
obey $\br_i\cdot\bL_j = \delta_{ij}$ in our conventions, and are given by
\beq\label{eq:Lambda2}
\bL_{-1} \,=\, - \bd\,\,, \quad
\bL_{0} \,=\, - \brr - 2\bd\,\,, \quad
\bL_{1} \,=\, - \brr - 2\bd + \frac12 \br_1 \, .
\eeq
With \eqref{eq:prodL} at hand it is straightforward to compute $\mT^{(\ell)}$ 
to any level. We give the first five levels of $\mTh^{(\ell)}$ explicitly in 
section \ref{sec:3.1}.

For $E_{10}$ and other rank > 3 hyperbolic KMAs
the central charges $c_\ell^\mathrm{coset}$ which appear 
on the r.h.s. of \eqref{eq:prodL} are not bounded from above by one; 
{\em e.g.} for $\E10$ we have $c_\ell^\mathrm{coset} > 1$ for $\ell\geq 4$. Thus
the Virasoro eigenvalues are no longer elements from the finite set 
\eqref{eq:hrs}, and there is no restriction anymore on the admissible 
values of $h^{(\ell)}$ (other than $h^{(\ell)} > 0$).
In this case we are not aware of an equation similar to 
\eqref{eq:hrs} for the associated coset Virasoro eigenvalues.

\subsection{Implementing simultaneous coset Virasoro actions}
\label{sec:2.2}

In \cite{CKMN} we considered only the action of one coset Virasoro algebra
at a time. We are now interested in simultaneous consecutive actions of 
different coset Virasoro algebras and the question how they are related.
This is the step that requires sticking with the tensor products \eqref{eq:mTn}
and \eqref{eq:mTnh}.
The following theorem is a central result as it establishes the 
commutativity of such consecutive coset Virasoro actions, and
formally expresses a fact which is already evident from 
the repeated iteration of taking products in \eqref{eq:prodL}:
\begin{theorem}\label{th:main}~\\
Let $u_1 , u_2 \in \mT^{(1)}$ and $v \in \mT^{(\ell -2)}$; then for any 
$\ell\geq 3$ and $m,n \in \mathbb{Z}$
\beq\label{eq:Theorem}
\cL{\ell}{m} \left(u_1 \otimes \cL{\ell-1}{n} 
(u_2 \otimes v) \right) \,=\, \left(\bE \otimes \cL{\ell-1}{n}\right) 
\left( \cL{\ell}{m} (u_1 \otimes ( u_2 \otimes v)) \right) \, .
\eeq
\end{theorem}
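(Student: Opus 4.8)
The plan is to recognise that the assertion \eqref{eq:Theorem} is precisely the statement that the two coset Virasoro operators $\cL{\ell}{m}$ and $\bE\otimes\cL{\ell-1}{n}$ commute as operators on $\mT^{(\ell)}=\mT^{(1)}\otimes\mT^{(\ell-1)}$. Indeed, using \eqref{eq:cosetVir2} to write $u_1\otimes\cL{\ell-1}{n}(u_2\otimes v)=(\bE\otimes\cL{\ell-1}{n})(u_1\otimes u_2\otimes v)$, the left-hand side of \eqref{eq:Theorem} becomes $\cL{\ell}{m}\bigl((\bE\otimes\cL{\ell-1}{n})(u_1\otimes u_2\otimes v)\bigr)$, while the right-hand side is $(\bE\otimes\cL{\ell-1}{n})\bigl(\cL{\ell}{m}(u_1\otimes u_2\otimes v)\bigr)$. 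Hence it suffices to prove
\[
\bigl[\,\cL{\ell}{m}\,,\,\bE\otimes\cL{\ell-1}{n}\,\bigr]\,=\,0
\]
as operators on $\mT^{(\ell)}$, for all $m,n\in\mathbb{Z}$; the two mode labels play no special role and need never be related.

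Next I would expand $\cL{\ell}{m}$ via its definition \eqref{eq:cosetVir} into the three Sugawara pieces $\sL{1}{m}$ (acting on the first tensor factor), $\sL{\ell-1}{m}$ (acting on the last $\ell-1$ factors) and $\sL{\ell}{m}$ (the total Sugawara on all $\ell$ factors), and show that $\bE\otimes\cL{\ell-1}{n}$ commutes with each of them separately. The key observation driving all three checks is that $\cL{\ell-1}{n}$ acts only on the last $\ell-1$ tensor slots and, by \eqref{eq:LmT}, commutes with the diagonal affine action on those slots; since it is the identity on the first slot it trivially commutes with any operator supported there as well. Consequently $\bE\otimes\cL{\ell-1}{n}$ commutes with every affine mode $T_k$ of the full diagonal action on all $\ell$ factors, namely with the slot-one modes (disjoint support) and with the diagonal action on slots $2,\dots,\ell$ (by \eqref{eq:LmT}), and hence with their sum.

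It then remains to lift commutativity with the affine modes to commutativity with the Sugawara bilinears. This is the composite-operator statement already noted below \eqref{eq:LmT}: since each normal-ordered product $\DD T_kT_{m-k}\DD$ in \eqref{eq:Sugawara} is a fixed quadratic expression in the modes $T_k$, an operator commuting with all $T_k$ automatically commutes with $\sL{\ell}{m}$ (the infinite sum causes no trouble, as only finitely many terms act nontrivially on any given vector by the grading). The same reasoning applied to the subactions shows that $\bE\otimes\cL{\ell-1}{n}$ commutes with $\sL{1}{m}$ on the first slot (disjoint support) and with $\sL{\ell-1}{m}$ on slots $2,\dots,\ell$ (again via \eqref{eq:LmT}, now on those $\ell-1$ slots). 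Adding the three pieces gives the desired vanishing of the commutator and thus \eqref{eq:Theorem}.

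I expect the main obstacle to be the middle step, namely verifying carefully that $\bE\otimes\cL{\ell-1}{n}$ commutes with the \emph{total} Sugawara operator $\sL{\ell}{m}$, which is the only one of the three pieces that genuinely mixes the first tensor factor with the remaining $\ell-1$ factors. All the content sits in combining the disjoint-support commutativity on the first slot with \eqref{eq:LmT} on the last $\ell-1$ slots to conclude commutativity with the full diagonal affine action, and only then invoking the composite-operator lemma. As a consistency check one can instead carry out the term-by-term expansion of both sides of \eqref{eq:Theorem}: the $\sL{1}{m}$-terms match trivially, the $\sL{\ell-1}{m}$-terms match because $\sL{\ell-1}{m}$ commutes with $\cL{\ell-1}{n}$, and the $\sL{\ell}{m}$-terms match by the commutativity just established.
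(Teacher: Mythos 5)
Your proposal is correct and follows essentially the same route as the paper: both reduce the identity to the commutativity of $\bE\otimes\cL{\ell-1}{n}$ with the three Sugawara pieces $\sL{1}{m}$, $\sL{\ell-1}{m}$ and $\sL{\ell}{m}$ in the definition \eqref{eq:cosetVir}, with the only nontrivial check being the total Sugawara operator, handled by pushing $\cL{\ell-1}{n}$ through the affine modes via \eqref{eq:LmT} and disjoint support on the first slot. The term-by-term expansion you describe as a consistency check is precisely the computation the paper carries out.
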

\begin{proof}~
Starting from the left-hand side of \eqref{eq:Theorem}, we deduce
\begin{align}
\begin{aligned}
\cL{\ell}{m} \left(u_1 \otimes \cL{\ell-1}{n} 
( u_2 \otimes v) \right) 
&\,=\, \sL{1}{m} u_1 \otimes \cL{\ell-1}{n} ( u_2 \otimes v) \\
&\quad \,+\, u_1 \otimes {\color{myred} \sL{\ell-1}{m} \ \cL{\ell-1}{n} } 
( u_2 \otimes v) \\
&\quad \,-\, \sL{\ell}{m} \left( u_1 \otimes \cL{\ell-1}{n} 
( u_2 \otimes v) \right) \, .
\end{aligned}
\end{align}
The {\color{myred}red} operators commute by \eqref{eq:LmT}. For the right-hand
side we have
\begin{align}
\begin{aligned}
&(\bE \otimes \cL{\ell-1}{n}) \left( \cL{\ell}{m} (u_1 \otimes 
(u_2 \otimes v)) \right) \\
&\,=\,(\bE \otimes \cL{\ell-1}{n}) \left( \sL{1}{m} u_1 \otimes 
( u_2 \otimes v) \right) \\
&\quad \,+\, (\bE \otimes \cL{\ell-1}{n}) \left( u_1 \otimes \sL{\ell-1}{m} 
( u_2 \otimes v \right) \\
&\quad \,-\, (\bE \otimes \cL{\ell-1}{n}) \sL{\ell}{m} 
\left( u_1 \otimes ( u_2 \otimes v) \right) \\
&\,=\, \left( \sL{1}{m} u_1 \otimes \cL{\ell-1}{n} 
( u_2 \otimes v) \right) \\
&\quad \,+\, \left(u_1 \otimes \cL{\ell-1}{n} \sL{\ell-1}{m} 
( u_2 \otimes v) \right) \\
&\quad \,-\, (\bE \otimes \cL{\ell-1}{n}) \sL{\ell}{m} 
\left( u_1 \otimes ( u_2 \otimes v) \right) \, .
\end{aligned}
\end{align}
Then only the last term needs to be checked
\begin{align}
\begin{aligned}
&(\bE \otimes \cL{\ell-1}{n}) \sL{\ell}{m} 
\left( u_1 \otimes ( u_2 \otimes v) \right) \\
&\,=\, \frac{1}{2(\ell+2)} (\bE \otimes \cL{\ell-1}{n}) 
\Bigg( \sum_k \DD T_k T_{m-k} \DD u_1 
\otimes ( u_2 \otimes v)) \\
&\quad \,+\, \sum_k u_1 \otimes \DD T_k T_{m-k} \DD 
( u_2 \otimes v) 
\,+\, 2 \sum_k T_k u_1 \otimes T_{m-k} ( u_2 \otimes v)) \Bigg) \\
&\,=\, \frac{1}{2(\ell+2)} \Bigg( \sum_k \DD T_k T_{m-k} \DD 
u_1 \otimes \cL{\ell-1}{n} ( u_2 \otimes v)) \\
&\quad \,+\, \sum_k u_1 \otimes \cL{\ell-1}{n} \DD T_k T_{m-k} \DD
( u_2 \otimes v) \\
&\quad \,+\, 2 \sum_k T_k u_1 \otimes \cL{\ell-1}{n} T_{m-k} 
( u_2 \otimes v) \Bigg) \\
& \,=\, \sL{\ell}{m} \left( u_1 \otimes \cL{\ell-1}{n} 
( u_2 \otimes v) \right) \, .
\end{aligned}
\end{align}
\end{proof}
We stress that the commutativity relation \eqref{eq:LmT} is crucial for 
this proof. The theorem then implies the following corollary, whose 
proof is entirely analogous to the proof just given.
\begin{corollary}~\\
For $u_1, \ldots , u_{k+1} \in \mathfrak{T}^{(1)}$, $v\in \mT^{(\ell -k -1)}$,
$\ell \ge 3$, 
and for $k\leq \ell -2$ and $m,n \in \mathbb{Z}$
\begin{align}
\begin{aligned}
\cL{\ell}{m} &\left(u_1 \otimes \left(u_2 \otimes \ldots \otimes 
u_k \otimes \cL{\ell-k}{n} ( u_{k+1} \otimes v)\right)\right) \\
&\quad \,=\, (\bE \otimes \ldots \otimes \bE \otimes \cL{\ell-k}{n}) 
\left( \cL{\ell}{m} ( u_1 \otimes (u_2 \ldots 
\otimes u_{k+1} \otimes v ))\right) \, .
\end{aligned}
\end{align}
\end{corollary}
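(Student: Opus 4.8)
The plan is to mimic the proof of Theorem~\ref{th:main}, expanding the outer operator $\cL{\ell}{m}$ through its definition \eqref{eq:cosetVir} and reducing everything to the affine--coset commutativity \eqref{eq:LmT}. Set $w = u_2 \otimes \cdots \otimes u_{k+1} \otimes v \in \mT^{(\ell-1)}$ and abbreviate $\cX = \bE^{\otimes(k-1)} \otimes \cL{\ell-k}{n}$, the operator acting on the $(\ell-1)$-fold block $w$ via \eqref{eq:cosetVir2}, so that $\bE^{\otimes k} \otimes \cL{\ell-k}{n} = \bE \otimes \cX$. In this notation the assertion reads
\beq
\cL{\ell}{m}\big(u_1 \otimes \cX w\big) \,=\, \big(\bE \otimes \cX\big)\,\cL{\ell}{m}\big(u_1 \otimes w\big) \, .
\eeq

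First I would apply \eqref{eq:cosetVir} to the grouping $u_1 \otimes (\,\cdot\,)$, using $\cL{\ell}{m}(u_1 \otimes X) = \sL{1}{m} u_1 \otimes X + u_1 \otimes \sL{\ell-1}{m} X - \sL{\ell}{m}(u_1 \otimes X)$. Substituting $X = \cX w$ reproduces the left-hand side, while substituting $X = w$ and then acting with $\bE \otimes \cX$ reproduces the right-hand side; comparing the outcomes term by term leaves three pairs. The first pair, $\sL{1}{m} u_1 \otimes \cX w$, matches at once since $\sL{1}{m}$ acts only on the disjoint first slot. The second and third pairs agree precisely when
\beq
\sL{\ell-1}{m}\,\cX w \,=\, \cX\,\sL{\ell-1}{m} w
\qquad\text{and}\qquad
\sL{\ell}{m}\big(u_1 \otimes \cX w\big) \,=\, \big(\bE \otimes \cX\big)\,\sL{\ell}{m}\big(u_1 \otimes w\big) \, ,
\eeq
so both reduce to the single statement that a Sugawara operator commutes with the deeper coset operator $\cX$, whose nontrivial part $\cL{\ell-k}{n}$ acts on the sub-block $u_{k+1} \otimes v$.

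To establish this I would expand the Sugawara operators via \eqref{eq:Sugawara} and \eqref{eq:Lsug} into the diagonal terms $\DD T_k T_{m-k}\DD$ on a single factor and the mixed terms $T_k \otimes \cdots \otimes T_{m-k}$ on two factors, exactly as in the final display of the proof of Theorem~\ref{th:main}. Each summand is a composite of affine modes, and the only nontrivial constituent of $\cX$ is the coset generator $\cL{\ell-k}{n}$; by \eqref{eq:LmT} this commutes with every affine mode acting on its own block $u_{k+1} \otimes v$, while affine modes on the factors $u_2, \ldots, u_k$ occupy slots disjoint from $\cX$. Pulling $\cX$ through each summand therefore merely replaces $w$ by $\cX w$, which yields both displayed identities and hence the corollary.

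The only real obstacle is the bookkeeping in the mixed terms: one must check that the extension of \eqref{eq:LmT} to the quadratic composites in \eqref{eq:Lsug} survives regardless of whether the two affine modes land inside or outside the sub-block on which $\cL{\ell-k}{n}$ acts. A mixed term with one mode inside and one outside still commutes with $\cX$, because the inside factor reduces to a single affine mode commuting with $\cL{\ell-k}{n}$ while the outside factor occupies a distinct tensor slot; since the two modes act on distinct factors, the commutations through $\cX$ can be carried out independently. Once this is verified the argument closes exactly as for $k=1$ in Theorem~\ref{th:main}.
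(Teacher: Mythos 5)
Your proposal is correct and follows essentially the same route as the paper, which proves Theorem~\ref{th:main} by exactly this expansion of $\cL{\ell}{m}$ via \eqref{eq:cosetVir} and reduction to the affine--coset commutativity \eqref{eq:LmT}, and then simply declares the corollary's proof ``entirely analogous.'' Your explicit reduction to the two Sugawara--$\cX$ commutation identities and the bookkeeping of diagonal versus mixed terms is precisely the content of that analogy.
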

We repeat that throughout these computations we always employ
the `consecutive' definition \eqref{eq:cosetVir} corresponding
to the split $k = 1 + (k-1)$ for $k\leq \ell$. We have thus shown 
that $[ \cL{k}{m} , \cL{k'}{n}] = 0$ for all $k\neq k' \in \{2,\dots,\ell\}$,
that is, different coset Virasoro algebras commute.

The fact that the actions of the coset Virasoro algebras also commute
for distinct pairs of such algebras enables us to implement
a simultaneous and commuting action of $(\ell -1)$ Virasoro
algebras and the affine subalgebra $\Aff$ on $\mT^{(\ell)}$: 
\beq\label{eq:VirAff}
\underbrace{{\rm Vir} \,\oplus \cdots \,\oplus {\rm{Vir}}}_{(\ell -1)\,\, \text{times}}
\,\oplus\,\Aff \; : \quad \mT^{(\ell)} \rightarrow \; \mT^{(\ell)}\,,
\eeq
where all summands commute. In fact, this is already evident from
\eqref{eq:prodL} when one iterates such multiplication of affine
representations. This result puts in evidence the
unbounded `pile-up' of Virasoro algebras and representations 
with increasing level that was already pointed out in \cite{CKMN}, and
indicates that in order to understand the global structure of $\mF$
we will eventually need to consider the action of an $\infty$--fold 
product of Virasoro algebras on $\mTh$.

As we will see in section~\ref{sec:4}, this nice 
structure \eqref{eq:VirAff} is lost when 
descending from the tensor algebra to the KMA because of the appearance 
of `holes', whence an action of these Virasoro algebras cannot be implemented
on the KMA. From the present perspective, this is the main complication in 
understanding the structure of $\mF$.

\subsection{Realizing \texorpdfstring{$\mT^{(\ell)}$}{Tell} in terms of transversal DDF states}
\label{sec:2.3}

For the concrete calculation we make use of the Frenkel-Kac vertex operator 
construction \cite{FK}, but in the more convenient form based 
on transversal DDF states introduced in \cite{GN}, and used 
extensively in \cite{CKMN}. The transversal DDF operators \cite{DDF}
act on a (small) subspace of a Fock space $\mH$ of physical states 
associated to a fully compactified subcritical bosonic string in three 
Lorentzian target space dimensions, which we will define properly below in 
\eqref{eq:mH}.
The basic representation $L\equiv\mT^{(1)}$, {\em alias} the set of all 
level-one states, is then the linear span of all transversal DDF states:
\beq\label{eq:DDF1}
\left\{ \Ae_{-m_1} \ldots \Ae_{-m_n} \ket{\ba_w^{(1)}} \right\} 
\quad \text{with} \quad \md \,=\, \sum_{i=1}^n m_i \,+\, w^2 \, ,
\eeq
where $w$ is the weight, and the depth $\md$ counts the coefficient 
of $(-\bd)$ in any such DDF state. The explicit expressions of the transversal 
(and longitudinal) DDF operators, which depend on 
the level $\ell$, are given in section~3 of \cite{CKMN}.
The level-one tachyonic momenta in the tachyonic ground states
$|\ba_w^{(1)}\rangle$ are
\beq\label{eq:tachyon1}
\ba_w^{(1)} \,=\, - \brr - w^2 \bd \,+\, w \br_1 \, , \qquad w\in\mathbb{Z}
\eeq 
and are in one-to-one correspondence with the maximal weights w.r.t.
the Heisenberg subalgebra of $\Aff$ appearing in the weight diagram 
of the basic representation. The full set of weights in the basic representation 
associated to such a DDF state therefore consists of the level-one roots
\beq
\br \,=\, - \brr \,-\, \left( w^2 \,+\, \sum_{i=1}^n m_i\right) \bd \,+\, w \br_1
\eeq
or, equivalently, in the bracket notation of \cite{CKMN},
\beq
\br\,=\, - \left[ 1\,,\, w^2 \,+\, \sum_{i=1}^n m_i \,,\, w^2 \,+\, \sum_{i=1}^n m_i \,-\,w \right] \, .
\eeq
The corresponding root multiplicities are then simply given
by the transversal partition function, that is, mult$(\br) = 
p(1 -\br^2/2) \equiv \vp^{-1}(1 - \br^2/2)$ \cite{FF,Kac}, where $\varphi(q)=\prod_{k>0} (1-q^k)$ is the Euler function. 
That these states indeed form a representation
of the affine algebra $\Aff$ is guaranteed in our approach by
the explicit expressibility of the affine
generators in terms of transversal DDF operators \cite{GN1,CKMN}.
From the above representation it is also clear that any level-one 
state of weight $w$ and depth $\md$ must satisfy $w^2 \le \md$, 
which restricts the $\mathfrak{sl}(2)$ representations that can appear 
at a given depth. It is then straightforward to list the low depth elements
of $L$ with their multiplicities, {\em cf.} eqn. (3.28) in \cite{CKMN}.

Given this explicit description of $\mT^{(1)}$, any element in $\mT^{(\ell)}$ 
can be represented 
by a sum of $\ell$-fold tensor products of such level-one DDF states, {\em viz.} 
\beq\label{eq:Fock1}
\psi^{\otimes (\ell)} \, =\,
\sum_\nu a_\nu \, u_1^{(\nu)} \otimes \cdots \otimes u_\ell^{(\nu)} \; \in \; \mT^{(\ell)} \, ,
\eeq
where each $u_i^{(\nu)}$ is a level-one state of the form \eqref{eq:DDF1}.
We thus denote by $\psi^{\otimes(\ell)}$ an element of the tensor algebra 
subspace $\mT^{(\ell)}$, with the symbol $\otimes$ as a mnemonic device, to
distinguish it from the Lie algebra element $\psi^{(\ell)}$ obtained after
converting tensor products into multiple commutators (we will 
reserve capital letters $\Psi^{\otimes(\ell)}\in \mT^{(\ell)}$ 
for maximal tensor ground states, see below).
While the elements of $\mT^{(1)}$ 
are thus associated with specific one-string states, $\mT$ in this way 
becomes associated with a {\em multi-string Fock space}
of transversal DDF states built on the tachyonic momenta
\eqref{eq:tachyon1}. This realization is somewhat reminiscent of an
interpretation proposed in \cite{Witten}, but the analogy should 
not be taken too literally as we cannot assign any (bosonic 
or fermionic) statistics to such multi-string states, where in fact
almost all mixed symmetry types will appear.\footnote{However,
 Young tableaux techniques appear to be only of limited use for studying
 products of affine or Virasoro representations.} In the
following section we will explain how the map from $\mT^{(\ell)}$
to $\mF^{(\ell)}$ will take us back to a one-string Fock space.

\subsection{Maximal tensor ground states}
\label{sec:2.4}

In order to analyze the decomposition of $\mT^{(\ell)}$ into 
a sum of representations of \eqref{eq:VirAff} we introduce the notion
of a {\bf maximal tensor ground state (MTG)}, which by definition is an 
affine ground state and a simultaneous ground state for all the relevant 
coset Virasoro algebras. The important fact here is that at each level there are
only {\em finitely many} such MTGs which completely
characterize $\mT^{(\ell)}$, in the sense that any element of $\mT^{(\ell)}$
can be reached by the combined action of the affine and Virasoro operators
on the MTGs. Since all elements of $\mF^{(\ell)}$ follow from 
conversion of tensor products to multiple commutators, we are thus
able to reach all the elements of $\mF^{(\ell)}$, though in a 
highly redundant manner. Elements of the 
hyperbolic KMA that are obtained by following the `vertex operator 
algebra' arrow in figure~\ref{fig:1} will thus be denoted by $\psi$. 
The map from $\mT^{(\ell)}$ to $\mF^{(\ell)}$
will be discussed in detail in section~\ref{sec:4}. 

For a non-vanishing element $\Psi^{\otimes(\ell)}\in \mT^{(\ell)}$ to be a MTG,
it has to satisfy the following three conditions:
\begin{enumerate}
\item $\Psi^{\otimes (\ell)}$ is an {\bf affine ground state}, that is, 
\beq\label{eq:cond1}
E_m \, \Psi^{\otimes(\ell)} \,=\, F_m \, \Psi^{\otimes(\ell)} \,=\, 
H_m \, \Psi^{\otimes(\ell)} \,=\, 0 \quad \mbox{for all $m\geq 1$}
\quad\text{and}\quad
E_0 \, \Psi^{\otimes(\ell)} \,=\,0\, .
\eeq 
The second condition means that $\Psi^{\otimes (\ell)}$ is an $\mathfrak{sl}(2)$ highest weight state. Sometimes we will refer to the whole finite-dimensional $\msl(2)$ multiplet generated by $\Psi^{\otimes (\ell)}$ as a `ground state
multiplet'.  The four conditions 
in~\eqref{eq:cond1} are equivalent to $E_0 \Psi^{\otimes(\ell)} =F_1 \Psi^{\otimes(\ell)} =0$ in the notation of~\cite{CKMN}.
\item $\Psi^{\otimes (\ell)}$ is a \textbf{full Virasoro ground state} w.r.t. 
all coset Virasoro algebras of lower level, {\em i.e.}
\beq\label{eq:cond2}
\big(\underbrace{\bE \otimes \ldots \otimes \bE}_{(\ell-k) \text{ times}} 
\otimes \ \cL{k}{m}\big)
\Psi^{\otimes (\ell)} \,=\, 0
\eeq
for all $2 \le k \le \ell $ and $m > 0$.
\item $\Psi^{\otimes (\ell)}$ is a \textbf{full Virasoro eigenstate} w.r.t.
to all coset Virasoro algebras of lower level, {\em i.e.} there
exists a set of $\ell -1$ eigenvalues $h^{(k)} \in H^{(k)}$ such 
that
\beq\label{eq:cond3}
\big(\underbrace{\bE \otimes \ldots \otimes \bE}_{(\ell-k) \text{ times}} 
\otimes \ \cL{k}{0} \big)
\Psi^{\otimes (\ell)} \,=\, h^{(k)} \Psi^{\otimes(\ell)}
\eeq
for $2 \le k \le \ell$.
\end{enumerate}
The set of eigenvalues $\{h^{(2)}, \ldots, h^{(\ell) } \}$
is read off from the decomposition of $\mT^{(\ell)}$ into affine modules and
Virasoro algebras, {\em cf.} \eqref{eq:prodL} and \eqref{eq:T13}--\eqref{eq:T5}. Similarly, the
$\mathfrak{sl}(2)$ representation, depth and weight of each MTG are also read
off from the affine modules in the decomposition $\mT^{(\ell)}$. The number of
such MTGs is finite at each level $\ell$, and can be determined from a 
simple induction argument, using the fact that the multiplicities on the
r.h.s. of \eqref{eq:prodL} are all equal to one. Namely, let $n_\ell$, with 
$n_1 = n_2 = 1$, be the number of MTGs on level $\ell$. Then the 
number of MTGs on level $\ell+1 > 2$ is
\beq
n_{\ell+1} \,=\, \left( \left\lfloor \frac{\ell+1}{2} \right\rfloor \,+\,1 \right)  n_\ell \, .
\eeq
Let us repeat that it is the
commutativity of $\cL{k}{m}$ and $\cL{k^\prime}{m}$ for $k \not= k^\prime$ 
that enables us to formulate the second and third conditions in the above list
for the action of one $\cL{k}{m}$ at a time. Without Theorem \ref{th:main} and 
its Corollary this structure would be completely obscured.

For the labeling of MTGs we employ the general notation
introduced in \eqref{eq:Fock1}, by
appending three extra labels to $\Psi^{\otimes(\ell)}$, namely
\beq\label{eq:MTG1}
\Psi_{\mathfrak{d},r,w}^{\otimes (\ell)} \, =\,
\sum_\nu a_\nu u_1^{(\nu)} \otimes \cdots \otimes u_\ell^{(\nu)} \; \in \; \mT^{(\ell)} \, , 
\eeq
where each $u_i^{(\nu)}$ is a level-one state of the form \eqref{eq:DDF1}, and
\bit
\item $\mathfrak{d}$ is the total depth (equal to the sum of the individual
 depths $\sum_{i=1}^\ell \md_i$ in \eqref{eq:MTG1});
\item $r$ is the $\mathfrak{sl}(2)$ representation (given through its dimension);
\item $w$ is half the $H_0 \equiv h_1$ weight of the corresponding state in the given $
\mathfrak{sl}(2)$ representation, {\em i.e.} 
\beq 
H_0 \, \Psi_{\mathfrak{d},r,w}^{\otimes (\ell)} \,=\, 
 2w \Psi_{\mathfrak{d},r,w}^{\otimes (\ell)};
\eeq
\item $\ell$ denotes the level.
\eit

Given a highest weight and a corresponding set of coset Virasoro eigenvalues,
together with the explicit representation of $\mT^{(1)}$ in terms of
DDF states, it is in principle straightforward to obtain the extremal state
of the respective MTG multiplet. We start from a
basis of Fock states \eqref{eq:Fock1} with the level, depth and weight 
according to the highest weight. Imposing the three conditions 
\eqref{eq:cond1}--\eqref{eq:cond3} singles out particular linear combinations
that is the MTG we are looking for. The MTGs are determined up to an overall
normalization.In the following section we will exhibit several examples to
show that the determination of the MTGs can be efficiently implemented and
automated with the present formalism, at least for low levels, in a way 
that reaches substantially beyond known results for $\mF$.

\section{Modules and maximal tensor ground states for \texorpdfstring{$\ell\leq 5$}{ell<=5}}
\label{sec:3}

In this section we give some examples for the MTGs
and spell out the decomposition of the tensor algebra explicitly up to level 5.
Once we have determined the set of MTGs on a given
level-$\ell$ we can obtain any tensor DDF state in $\mT^{(\ell)}$ by 
the joint action of the affine generators $T_m$ 
and the $(\ell-1)$ coset Virasoro operators 
$\cL{k}{m}$ ($2 \le k \le \ell$) with $m\leq -1$.

For simplicity we restrict attention in the remainder to the subalgebra 
$\mTh\subset \mT$, removing those representations which are not relevant in 
relation to $\mF$. This preserves the Virasoro structure, but makes the listing 
of the representations a little more economical.

\subsection{The tensor algebra up to level 5}
\label{sec:3.1}

From \cite{CKMN} we recall that the first three levels of the tensor 
algebra $\mTh$ are given by
\begin{align}\label{eq:T13}
\begin{aligned}
\mTh^{(1)} & \,\equiv \, \mF^{(1)} \,=\, L(\bL_0 + 2\bd) \, , \\[2ex]
\mTh^{(2)} &\,=\, \Vir(\tfrac{1}{2},\tfrac{1}{2}) \otimes L(2\bL_1 + 3\bd) 
\, , \\[2ex]
\mTh^{(3)} &\,=\, \Vir(\tfrac{7}{10},\tfrac{3}{2}) 
\otimes \Vir(\tfrac{1}{2},\tfrac{1}{2}) 
\otimes L(3\bL_0 + 4\bd) \\
&\quad \oplus \Vir(\tfrac{7}{10},\tfrac{1}{10}) 
\otimes \Vir(\tfrac{1}{2},\tfrac{1}{2}) 
\otimes L(\bL_0 + 2\bL_1 + 5\bd)\, .
\end{aligned}
\end{align}
For level 4, the decomposition w.r.t. \eqref{eq:VirAff} gives rise to
six contributions, \cite{CKMN}
\begin{align}
\begin{aligned}
\mTh^{(4)} &\,=\, 
\Vir(\tfrac{4}{5},0) \otimes \Vir(\tfrac{7}{10},\tfrac{3}{2}) 
\otimes \Vir(\tfrac{1}{2},\tfrac{1}{2}) \otimes L(4\bL_0 + 6\bd) \\
&\quad \oplus \Vir(\tfrac{4}{5},\tfrac{2}{3}) 
\otimes \Vir(\tfrac{7}{10},\tfrac{3}{2}) 
\otimes \Vir(\tfrac{1}{2},\tfrac{1}{2}) 
\otimes L(2\bL_0 + 2\bL_1 + 5\bd) \\
&\quad \oplus \Vir(\tfrac{4}{5},3) 
\otimes \Vir(\tfrac{7}{10},\tfrac{3}{2}) 
\otimes \Vir(\tfrac{1}{2},\tfrac{1}{2}) 
\otimes L(4\bL_1 + 2\bd) \\
&\quad \oplus \Vir(\tfrac{4}{5},\tfrac{7}{5}) 
\otimes \Vir(\tfrac{7}{10},\tfrac{1}{10}) 
\otimes \Vir(\tfrac{1}{2},\tfrac{1}{2}) 
\otimes L(4\bL_0 + 6\bd) \\
&\quad \oplus \Vir(\tfrac{4}{5},\tfrac{1}{15}) 
\otimes \Vir(\tfrac{7}{10},\tfrac{1}{10}) 
\otimes \Vir(\tfrac{1}{2},\tfrac{1}{2}) 
\otimes L(2\bL_0 + 2 \bL_1 + 7\bd) \\
&\quad \oplus \Vir(\tfrac{4}{5},\tfrac{2}{5}) 
\otimes \Vir(\tfrac{7}{10},\tfrac{1}{10}) 
\otimes \Vir(\tfrac{1}{2},\tfrac{1}{2}) \otimes L(4\bL_1 + 6\bd) \, .
\end{aligned}
\end{align}
Similarly, for level 5, we find $6\times 3 = 18$ contributions
\begin{align}
\begin{aligned}\label{eq:T5}
\mTh^{(5)} &\,=\, 
\Vir(\tfrac{6}{7},0) \otimes 
\Vir(\tfrac{4}{5},0) \otimes 
\Vir(\tfrac{7}{10},\tfrac{3}{2}) \otimes 
\Vir(\tfrac{1}{2},\tfrac{1}{2}) \otimes L(5\bL_0 + 8\bd) \\
&\quad \oplus \Vir(\tfrac{6}{7},\tfrac{5}{7}) \otimes 
\Vir(\tfrac{4}{5},0) \otimes 
\Vir(\tfrac{7}{10},\tfrac{3}{2}) \otimes 
\Vir(\tfrac{1}{2},\tfrac{1}{2}) \otimes L(3\bL_0 + 2 \bL_1 + 7\bd) \\
&\quad \oplus \Vir(\tfrac{6}{7},\tfrac{22}{7}) \otimes 
\Vir(\tfrac{4}{5},0) \otimes 
\Vir(\tfrac{7}{10},\tfrac{3}{2}) \otimes 
\Vir(\tfrac{1}{2},\tfrac{1}{2}) \otimes L(\bL_0 + 4 \bL_1 + 4\bd) \\
&\quad \oplus \Vir(\tfrac{6}{7},\tfrac{4}{3}) \otimes 
\Vir(\tfrac{4}{5},\tfrac{2}{3}) \otimes 
\Vir(\tfrac{7}{10},\tfrac{3}{2}) \otimes 
\Vir(\tfrac{1}{2},\tfrac{1}{2}) \otimes L(5\bL_0 + 6\bd) \\
&\quad \oplus \Vir(\tfrac{6}{7},\tfrac{1}{21}) \otimes 
\Vir(\tfrac{4}{5},\tfrac{2}{3}) \otimes 
\Vir(\tfrac{7}{10},\tfrac{3}{2}) \otimes 
\Vir(\tfrac{1}{2},\tfrac{1}{2}) \otimes L(3\bL_0 + 2 \bL_1 + 7\bd) \\
&\quad \oplus \Vir(\tfrac{6}{7},\tfrac{10}{21}) \otimes 
\Vir(\tfrac{4}{5},\tfrac{2}{3}) \otimes 
\Vir(\tfrac{7}{10},\tfrac{3}{2}) \otimes 
\Vir(\tfrac{1}{2},\tfrac{1}{2}) \otimes L(\bL_0 + 4 \bL_1 + 6\bd) \\
&\quad \oplus \Vir(\tfrac{6}{7},5) \otimes 
\Vir(\tfrac{4}{5},3) \otimes 
\Vir(\tfrac{7}{10},\tfrac{3}{2}) \otimes 
\Vir(\tfrac{1}{2},\tfrac{1}{2}) \otimes L(5\bL_0) \\
&\quad \oplus \Vir(\tfrac{6}{7},\tfrac{12}{7}) \otimes 
\Vir(\tfrac{4}{5},3) \otimes 
\Vir(\tfrac{7}{10},\tfrac{3}{2}) \otimes 
\Vir(\tfrac{1}{2},\tfrac{1}{2}) \otimes L(3\bL_0 + 2 \bL_1 + 3\bd) \\
&\quad \oplus \Vir(\tfrac{6}{7},\tfrac{1}{7}) \otimes 
\Vir(\tfrac{4}{5},3) \otimes 
\Vir(\tfrac{7}{10},\tfrac{3}{2}) \otimes 
\Vir(\tfrac{1}{2},\tfrac{1}{2}) \otimes L(\bL_0 + 4 \bL_1 + 4\bd) \\
&\quad \oplus \Vir(\tfrac{6}{7},0) \otimes 
\Vir(\tfrac{4}{5},\tfrac{7}{5}) \otimes 
\Vir(\tfrac{7}{10},\tfrac{1}{10}) \otimes 
\Vir(\tfrac{1}{2},\tfrac{1}{2}) \otimes L(5\bL_0 + 8\bd) \\
&\quad \oplus \Vir(\tfrac{6}{7},\tfrac{5}{7}) \otimes 
\Vir(\tfrac{4}{5},\tfrac{7}{5}) \otimes 
\Vir(\tfrac{7}{10},\tfrac{1}{10}) \otimes 
\Vir(\tfrac{1}{2},\tfrac{1}{2}) \otimes L(3\bL_0 + 2 \bL_1 + 7\bd) \\
&\quad \oplus \Vir(\tfrac{6}{7},\tfrac{22}{7}) \otimes 
\Vir(\tfrac{4}{5},\tfrac{7}{5}) \otimes 
\Vir(\tfrac{7}{10},\tfrac{1}{10}) \otimes 
\Vir(\tfrac{1}{2},\tfrac{1}{2}) \otimes L(\bL_0 + 4 \bL_1 + 4\bd) \\
&\quad \oplus \Vir(\tfrac{6}{7},\tfrac{4}{3}) \otimes 
\Vir(\tfrac{4}{5},\tfrac{1}{15}) \otimes 
\Vir(\tfrac{7}{10},\tfrac{1}{10}) \otimes 
\Vir(\tfrac{1}{2},\tfrac{1}{2}) \otimes L(5\bL_0 + 8\bd) \\
&\quad \oplus \Vir(\tfrac{6}{7},\tfrac{1}{21}) \otimes 
\Vir(\tfrac{4}{5},\tfrac{1}{15}) \otimes 
\Vir(\tfrac{7}{10},\tfrac{1}{10}) \otimes 
\Vir(\tfrac{1}{2},\tfrac{1}{2}) \otimes L(3\bL_0 + 2 \bL_1 + 9\bd) \\
&\quad \oplus \Vir(\tfrac{6}{7},\tfrac{10}{21}) \otimes 
\Vir(\tfrac{4}{5},\tfrac{1}{15}) \otimes 
\Vir(\tfrac{7}{10},\tfrac{1}{10}) \otimes 
\Vir(\tfrac{1}{2},\tfrac{1}{2}) \otimes L(\bL_0 + 4 \bL_1 + 8\bd) \\
&\quad \oplus \Vir(\tfrac{6}{7},5) \otimes 
\Vir(\tfrac{4}{5},\tfrac{2}{5}) \otimes 
\Vir(\tfrac{7}{10},\tfrac{1}{10}) \otimes 
\Vir(\tfrac{1}{2},\tfrac{1}{2}) \otimes L(5\bL_0 + 4\bd) \\
&\quad \oplus \Vir(\tfrac{6}{7},\tfrac{12}{7}) \otimes 
\Vir(\tfrac{4}{5},\tfrac{2}{5}) \otimes 
\Vir(\tfrac{7}{10},\tfrac{1}{10}) \otimes 
\Vir(\tfrac{1}{2},\tfrac{1}{2}) \otimes L(3\bL_0 + 2 \bL_1 + 7\bd) \\
&\quad \oplus \Vir(\tfrac{6}{7},\tfrac{1}{7}) \otimes 
\Vir(\tfrac{4}{5},\tfrac{2}{5}) \otimes 
\Vir(\tfrac{7}{10},\tfrac{1}{10}) \otimes 
\Vir(\tfrac{1}{2},\tfrac{1}{2}) \otimes L(\bL_0 + 4 \bL_1 + 8\bd) \, .
\end{aligned}
\end{align}
For $\ell \ge 3$ these results were obtained by multiplying $L(\bL_0 + 2\bd)$
with $\mT^{(\ell-1)}$ and using \eqref{eq:prodL}. In principle, the tensor
algebra can therefore be obtained successively to arbitrary level. The 
expressions for the full tensor products $\mT^{(\ell)}$ are similar but twice as
long. The correctness of the above expressions is ensured by Theorem~4.1 of
\cite{KW1}, but can also be checked by matching characters on the l.h.s. and 
r.h.s. and comparing the associated $q$-series (where $q = e^{-\bd}$), which
when expanded to sufficiently high orders leads to unique answers.
This simple reasoning would be somewhat obscured if we followed the usual 
practice of including fractional powers of $q$ in the definition of
the Virasoro characters, whereas the l.h.s. contains only integer powers of $q$; 
hence the exponents on the r.h.s. must always combine to integers.
In other words, using \cite{KW1} we have gained complete control over
the tensor product spaces $\mTh^{(\ell)}$ and $\mT^{(\ell)}$ for arbirary
levels $\ell$, which can now be worked out with little effort.
Let us also note the multiple appearance from level $\ell=4$
onwards of affine modules with the same weights, but differing in the
$h$ eigenvalues of the accompanying Virasoro modules.

In future work it will be interesting to investigate the character identities
that follow from the above decompositions. For those it is convenient to adopt
a normalization of the characters {\em without} fractional exponents, because 
the l.h.s. of these equalities has no fractional exponents. For instance, for
the Virasoro characters occurring in the above equations we take \cite{Welsh}
\beq\label{eq:VirCh}
\chi^{p,p'}_{r,s} (q) \,\equiv\, {\rm Tr}\; q^{\mL_0 -h^{(\ell)}_{r,s}} \,=\,
\varphi^{-1}(q) \sum_{k\in\mathbb{Z}} \left(
q^{pp'k^2 + k(pr - p's)} - q^{(p'k+r)(pk +s)} \right) \, ,
\eeq
where $p = p^\prime + 1 = \ell+2$ and $\mL_0$ is the appropriate coset Virasoro 
generator. To be sure, fractional exponents are needed to exhibit the modular 
properties of the characters and the associated $\Theta$-functions, but these
properties will now have to be analyzed {\em jointly} 
for the affine and Virasoro characters. We leave this problem to future study.

\subsection{Some explicit MTGs}
\label{sec:3.2}

In this subsection, we give some examples for MTGs.
The states have been computed using the DDF Mathematica package 
\cite{DDFPackage}, which offers a direct implementation of the algorithm
outlined in section~\ref{sec:2.4}. 

Before proceeding let us first clarify the relation between the MTGs
and the maximal Virasoro ground states introduced in \cite{CKMN}.
The latter are elements of the tensor product $\mF^{(1)}\otimes \mF^{(\ell -1)}$,
and thus neither in $\mTh^{(\ell)}$ nor $\mF^{(\ell)}$. According to \cite{CKMN}
the maximal (Virasoro) ground states $\Psi_{\mathfrak{d},r,w}^{(\ell)}$ 
are thus defined by an incomplete conversion of the tensor product via
\beq
\Psi_{\mathfrak{d}+\ell-2,r,w}^{(\ell)} \,=\, 
\left(\bE \otimes \cI_{\ell-1} \cL{\ell-1}{-1} \right) \ldots 
\left( \bE \otimes \ldots \otimes \bE \otimes \cI_2 \cL{2}{-1} \right) 
\Psi_{\mathfrak{d},r,w}^{\otimes(\ell)} \,\in \,
\mathfrak{F}^{(1)} \otimes \; \mF^{(\ell -1)} 
\eeq
up to some irrelevant normalization constant; they thus allow  only for a
realization of the `last' level-$\ell$ coset Virasoro algebra. By contrast, 
the MTGs defined above are more general because they allow for the 
simultaneous action of all coset Virasoros of level $k \le \ell$, whereas 
this is not possible for $\mF^{(1)} \otimes \mF^{(\ell -1)}$. On the latter 
maximal Virasoro  ground states we can thus only act with $\cL{\ell}{m}$, 
which is not enough to generate the full $\mF^{(\ell)}$. Note also that
here we have to insert coset Virasoro operators in order to obtain a 
non-vanishing element of $\mF^{(\ell -1)}$, which forces us to `undo'
the previous conversion by trading $\mF^{(\ell -1)}$
for $\mF^{(1)} \otimes \mF^{(\ell -2)}$. For this
reason the MTGs employed here completely replace the maximal (Virasoro) ground 
states from \cite{CKMN}, and the latter will no longer be used here.

\subsubsection*{Level 1 and level 2}
On level-one there is as yet no action of a Virasoro algebra,
there is only a single affine ground state
\beq
\Psi_{0,\bE,0}^{\otimes(1)} \,=\, \ket{\ba_0^{(1)}} \, , \quad 
\eeq
from which all states in $\mT^{(1)}$ can be reached
by the action of the affine generators. This is the only
ground state that is also an element of $\mF$.
At level 2 there is one MTG that
agrees with the maximal (Virasoro) ground state from \cite{CKMN}, 
which is
\beq
\label{eq:mtg2}
\Psi_{1,\bD,1}^{\otimes(2)} \,=\, \ket{\ba_{1}^{(1)}} \wedge \ket{\ba_0^{(1)}} \, .
\eeq
This MTG, and all the others, are {\em virtual}, {\em i.e.} they are mapped
to zero when the tensor products are converted to Lie algebra commutators.

\subsubsection*{Level 3}
At level 3 we find two MTGs. The MTG corresponding to
\beq
\Vir(\tfrac{7}{10},\tfrac{3}{2}) \otimes 
\Vir(\tfrac{1}{2},\tfrac{1}{2}) \otimes L(3\bL_0 + 4\bd)
\eeq
is
\begin{align}\label{eq:mtg31}
\begin{aligned}
\Psi_{2,\boldsymbol{1},0}^{\otimes \, (3)} 
&\,=\, \ket{\ba_{0}^{(1)}} \otimes \ket{\ba_{0}^{(1)}} \wedge 
\Ae_{-1} \Ae_{-1} \ket{\ba_{0}^{(1)}} \\
&\quad \,+\, \ket{\ba_{-1}^{(1)}} \otimes \Psi_{1,\mathbf{3},1}^{\otimes \, (2)} 
 \,+\, 2 \ \Ae_{-1} \ket{\ba_0^{(1)}} \otimes \Psi_{1,\mathbf{3},0}^{\otimes \, (2)} 
 \,+\, \ket{\ba_{1}^{(1)}} \otimes \Psi_{1,\mathbf{3},-1}^{\otimes \, (2)} \, .
\end{aligned}
\end{align}
The states $\Psi_{1,\mathbf{3},0}^{\otimes \, (2)}$, respectively 
$\Psi_{1,\mathbf{3},-1}^{\otimes \, (2)}$, are obtained from 
$\Psi_{1,\mathbf{3},1}^{\otimes \, (2)}$ by the application of $\frac{1}{2} \, {}^{[2]}\!F_0$, 
respectively $-\frac{1}{2} \, {}^{[2]}\!F_0 \, {}^{[2]}\!F_0$ .
The MTG corresponding to
\beq
\Vir(\tfrac{7}{10},\tfrac{1}{10}) \otimes 
\Vir(\tfrac{1}{2},\tfrac{1}{2}) \otimes L(\bL_0 + 2\bL_1 + 5\bd)
\eeq
is
\beq\label{eq:mtg32}
\Psi_{1,\boldsymbol{3},1}^{\otimes \, (3)} 
\,=\, \ket{\ba_0^{(1)}} \otimes \Psi_{1,\boldsymbol{3},1}^{\otimes \, (2)} 
\,=\, \ket{\ba_0^{(1)}} \otimes \ket{\ba_0^{(1)}} \wedge \ket{\ba_1^{(1)}} \, .
\eeq

\subsubsection*{Level 4}
On level 4 we have computed all six MTGs. Here we present only
five of these expressions; the sixth MTG is known but too lengthy
and shall be omitted here.
The singlet MTG corresponding to
\beq
\Vir(\tfrac{4}{5},0) \otimes \Vir(\tfrac{7}{10},\tfrac{3}{2}) 
\otimes \Vir(\tfrac{1}{2},\tfrac{1}{2}) \otimes L(4\bL_0 + 6\bd) \, .
\eeq
is
\beq\label{eq:mtg41}
\Psi_{2,\boldsymbol{1},0}^{\otimes \, (4)} \,=\, \ket{\ba_0^{(1)}} 
\otimes \Psi_{2,\boldsymbol{1},0}^{\otimes \, (3)} \, . 
\eeq
The triplet MTG corresponding to
\beq
\Vir(\tfrac{4}{5},\tfrac{2}{3}) 
\otimes \Vir(\tfrac{7}{10},\tfrac{3}{2}) 
\otimes \Vir(\tfrac{1}{2},\tfrac{1}{2}) 
\otimes L(2\bL_0 + 2\bL_1 + 5\bd)
\eeq
is
\begin{align}\label{eq:mtg42}
\begin{aligned}
\Psi_{3,\mathbf{3},1}^{\otimes \, (4)} 
&\,=\, \sqrt{2} \ket{\ba_{0}^{(1)}} \otimes \ket{\ba_{0}^{(1)}} 
\otimes \ket{\ba_{0}^{(1)}} \wedge \Ae_{-2} \ket{\ba_{1}^{(1)}} \\
&\quad \,-\, \ket{\ba_{0}^{(1)}} \otimes \ket{\ba_{0}^{(1)}} 
\otimes \ket{\ba_{0}^{(1)}} \wedge \Ae_{-1} \Ae_{-1} \ket{\ba_{1}^{(1)}} \\
&\quad \,-\, \ket{\ba_{0}^{(1)}} \otimes \ket{\ba_{0}^{(1)}} 
\otimes \Ae_{-1} \Ae_{-1} \ket{\ba_{0}^{(1)}} \wedge \ket{\ba_{1}^{(1)}} \\
&\quad \,-\, \ket{\ba_{0}^{(1)}} \otimes \ket{\ba_{1}^{(1)}} 
\otimes \ket{\ba_{0}^{(1)}} \wedge \Ae_{-1} \Ae_{-1} \ket{\ba_{1}^{(1)}} \\
&\quad \,+\, 2 \ket{\ba_{0}^{(1)}} \otimes \Ae_{-1} \ket{\ba_{0}^{(1)}} 
\otimes \ket{\ba_{0}^{(1)}} \wedge \Ae_{-1} \ket{\ba_{1}^{(1)}} \\
&\quad \,-\, \sqrt{2} \ket{\ba_{0}^{(1)}} \otimes \ket{\ba_{1}^{(1)}} 
\otimes \ket{\ba_{0}^{(1)}} \wedge \Ae_{-2} \ket{\ba_{1}^{(1)}} \\
&\quad \,+\, 2 \ket{\ba_{0}^{(1)}} \otimes \ket{\ba_{1}^{(1)}} 
\otimes \ket{\ba_{1}^{(1)}} \wedge \ket{\ba_{-1}^{(1)}} \\
&\quad \,-\, 2 \ket{\ba_{0}^{(1)}} \otimes 
\Ae_{-1} \ket{\ba_{0}^{(1)}} \otimes \ket{\ba_{1}^{(1)}} 
\wedge \Ae_{-1} \ket{\ba_{0}^{(1)}} \\
&\quad \,-\, 2 \ket{\ba_{0}^{(1)}} \otimes \Ae_{-1} \Ae_{-1} \ket{\ba_{0}^{(1)}} 
\otimes \ket{\ba_{0}^{(1)}} \wedge \ket{\ba_{1}^{(1)}} \\
&\quad \,-\, 2 \ket{\ba_{0}^{(1)}} \otimes \Ae_{-1} \ket{\ba_{1}^{(1)}} 
\otimes \Ae_{-1} \ket{\ba_{0}^{(1)}} \wedge \ket{\ba_{0}^{(1)}} \\
&\quad \,+\,\sqrt{2} \ket{\ba_{0}^{(1)}} \otimes \Ae_{-2} \ket{\ba_{0}^{(1)}} 
\otimes \ket{\ba_{1}^{(1)}} \wedge \ket{\ba_{0}^{(1)}} \\
&\quad \,-\, 3 \ket{\ba_{1}^{(1)}} 
\otimes \Psi_{2,\boldsymbol{1},0}^{\otimes \, (3)} \, .
\end{aligned}
\end{align}
The fiveplet MTG corresponding to
\beq
\Vir(\tfrac{4}{5},3) 
\otimes \Vir(\tfrac{7}{10},\tfrac{3}{2}) 
\otimes \Vir(\tfrac{1}{2},\tfrac{1}{2}) 
\otimes L(4\bL_1 + 2\bd)
\eeq
is known, but we omit the result because it is simply
too long. The singlet MTG corresponding to
\beq
\Vir(\tfrac{4}{5},\tfrac{7}{5}) \otimes 
\Vir(\tfrac{7}{10},\tfrac{1}{10}) \otimes 
\Vir(\tfrac{1}{2},\tfrac{1}{2}) \otimes L(4\bL_0 + 6\bd) 
\eeq
is
\beq\label{eq:mtg43}
\bar\Psi_{2,\boldsymbol{1},0}^{\otimes \, (4)} 
\,=\, \ket{\ba_0^{(1)}} \otimes \bar\Psi_{2,\boldsymbol{1},0}^{\otimes \, (3)} 
\,-\, 7 \ket{\ba_{-1}^{(1)}} 
\otimes \Psi_{1,\mathbf{3},1}^{\otimes \, (3)} 
\,-\, 14 \, \Ae_{-1} \ket{\ba_{0}^{(1)}} 
\otimes \Psi_{1,\mathbf{3},0}^{\otimes \, (3)} 
\,-\, 7 \ket{\ba_{1}^{(1)}} 
\otimes \Psi_{1,\mathbf{3},-1}^{\otimes \, (3)} \, . 
\eeq
The triplet MTG corresponding to
\beq
\Vir(\tfrac{4}{5},\tfrac{1}{15}) \otimes 
\Vir(\tfrac{7}{10},\tfrac{1}{10}) \otimes 
\Vir(\tfrac{1}{2},\tfrac{1}{2}) \otimes L(2\bL_0 + 2 \bL_1 + 7\bd)
\eeq
is
\beq\label{eq:mtg44}
\Psi_{1,\boldsymbol{3},1}^{\otimes \, (4)} \,=\, \ket{\ba_0^{(1)}} \otimes 
\Psi_{1,\boldsymbol{3},1}^{\otimes \, (3)} \,=\, \ket{\ba_0^{(1)}} \otimes 
\ket{\ba_0^{(1)}} \otimes \ket{\ba_0^{(1)}} \wedge \ket{\ba_1^{(1)}} \, .
\eeq
The fiveplet MTG corresponding to
\beq
\Vir(\tfrac{4}{5},\tfrac{2}{5}) \otimes 
\Vir(\tfrac{7}{10},\tfrac{1}{10}) \otimes 
\Vir(\tfrac{1}{2},\tfrac{1}{2}) \otimes L(4\bL_1 + 6\bd)
\eeq
is
\beq\label{eq:mtg45}
\Psi_{2,\mathbf{5}.2}^{\otimes \, (4)} 
\,=\, \ket{\ba_0^{(1)}} \otimes \ket{\ba_1^{(1)}} \otimes 
\ket{\ba_0^{(1)}} \wedge \ket{\ba_1^{(1)}} 
\,-\, \ket{\ba_0^{(1)}} \otimes \ket{\ba_0^{(1)}} \otimes 
\ket{\ba_0^{(1)}} \wedge \ket{\ba_1^{(1)}} \, .
\eeq

\subsubsection*{Level 5}
Of the 18 MTG at level 5 we have computed 8 which we present in the following.
The MTG corresponding to 
\beq
\Vir(\tfrac{6}{7},0) \otimes \Vir(\tfrac{4}{5},0) \otimes 
\Vir(\tfrac{7}{10},\tfrac{3}{2}) \otimes 
\Vir(\tfrac{1}{2},\tfrac{1}{2}) \otimes L(5\bL_0 + 8\bd) 
\eeq
is
\begin{align}
\begin{aligned}
\Psi_{2,\mathbf{1},0}^{\otimes (5)} &\,=\,
\,-\,\ket{\ba_0^{(1)}} \otimes \ket{\ba_0^{(1)}} \otimes \Ae_{-1} \ket{\ba_0^{(1)}} 
\otimes \Ae_{-1} \ket{\ba_0^{(1)}} \wedge \ket{\ba_0^{(1)}} \\
&\quad \,+\, \frac{1}{2} \ket{\ba_0^{(1)}} \otimes \ket{\ba_0^{(1)}} \otimes 
\ket{\ba_0^{(1)}} \otimes \Ae_{-1} \Ae_{-1} \ket{\ba_0^{(1)}} \wedge 
\ket{\ba_0^{(1)}} \\
&\quad \,-\, \ket{\ba_0^{(1)}} \otimes \ket{\ba_0^{(1)}} \otimes 
\ket{\ba_{-1}^{(1)}} \otimes \ket{\ba_0^{(1)}} \wedge \ket{\ba_1^{(1)}} \\
&\quad \,-\, \ket{\ba_0^{(1)}} \otimes \ket{\ba_0^{(1)}} \otimes \ket{\ba_1^{(1)}} 
\otimes \ket{\ba_0^{(1)}} \wedge \ket{\ba_{-1}^{(1)}} \, .
\end{aligned}
\end{align}
The MTG corresponding to 
\beq
\Vir(\tfrac{6}{7},\tfrac{5}{7}) \otimes \Vir(\tfrac{4}{5},0) 
\otimes \Vir(\tfrac{7}{10},\tfrac{3}{2}) 
\otimes \Vir(\tfrac{1}{2},\tfrac{1}{2}) 
\otimes L(3\bL_0 + 2 \bL_1 + 7\bd) 
\eeq
is
\begin{align}
\begin{aligned}
\Psi_{3,\mathbf{3},1}^{\otimes (5)} &\,=\,
\sqrt{2} \ket{\ba_0^{(1)}} \otimes \ket{\ba_0^{(1)}} \otimes \Ae_{-2} \ket{\ba_0^{(1)}} 
\otimes \ket{\ba_0^{(1)}} \wedge \ket{\ba_1^{(1)}} \\
&\quad \,-\, 2\ket{\ba_0^{(1)}} \otimes \ket{\ba_0^{(1)}} 
\otimes \Ae_{-1} \ket{\ba_0^{(1)}} 
\otimes \Ae_{-1} \ket{\ba_0^{(1)}} \wedge \ket{\ba_1^{(1)}} \\
&\quad \,-\, 2 \sqrt{2} \ket{\ba_0^{(1)}} 
\otimes \ket{\ba_0^{(1)}} \otimes \Ae_{-1} \ket{\ba_0^{(1)}} 
\otimes \ket{\ba_0^{(1)}} \wedge \Ae_{-1} \ket{\ba_1^{(1)}} \\
&\quad \,+\, 2 \ket{\ba_0^{(1)}} \otimes \ket{\ba_0^{(1)}} 
\otimes \Ae_{-1} \ket{\ba_1^{(1)}} \otimes \Ae_{-1} \ket{\ba_0^{(1)}} 
\wedge \ket{\ba_0^{(1)}} \\
&\quad \,+\, 2 \ket{\ba_0^{(1)}} \otimes \ket{\ba_0^{(1)}} 
\otimes \Ae_{-1} \Ae_{-1} \ket{\ba_0^{(1)}} \otimes \ket{\ba_0^{(1)}} 
\wedge \ket{\ba_1^{(1)}} \\
&\quad \,+\, \sqrt{2} \ket{\ba_0^{(1)}} \otimes \ket{\ba_0^{(1)}} \otimes \ket{\ba_0^{(1)}} 
\otimes \Ae_{-2} \ket{\ba_1^{(1)}} \wedge \ket{\ba_0^{(1)}} \\
&\quad \,+\, \ket{\ba_0^{(1)}} \otimes \ket{\ba_0^{(1)}} 
\otimes \ket{\ba_0^{(1)}} \otimes \Ae_{-1} \Ae_{-1} \ket{\ba_0^{(1)}} 
\wedge \ket{\ba_1^{(1)}} \\
&\quad \,+\, \ket{\ba_0^{(1)}} \otimes \ket{\ba_0^{(1)}} 
\otimes \ket{\ba_0^{(1)}} \otimes \ket{\ba_0^{(1)}} 
\wedge \Ae_{-1} \Ae_{-1} \ket{\ba_1^{(1)}} \\
&\quad \,-\, \sqrt{2} \ket{\ba_0^{(1)}} \otimes \ket{\ba_0^{(1)}} \otimes \ket{\ba_1^{(1)}} 
\otimes \Ae_{-2} \ket{\ba_0^{(1)}} \wedge \ket{\ba_0^{(1)}} \\
&\quad \,-\, \ket{\ba_0^{(1)}} \otimes \ket{\ba_0^{(1)}} 
\otimes \ket{\ba_1^{(1)}} \otimes \Ae_{-1} \Ae_{-1} \ket{\ba_0^{(1)}} 
\wedge \ket{\ba_0^{(1)}} \\
&\quad \,+\, 2\ket{\ba_0^{(1)}} \otimes \ket{\ba_0^{(1)}} 
\otimes \ket{\ba_1^{(1)}} \otimes \ket{\ba_{-1}^{(1)}} 
\wedge \ket{\ba_1^{(1)}} \\
&\quad \,-\, 2\ket{\ba_0^{(1)}} \otimes \ket{\ba_1^{(1)}} 
\otimes \Ae_{-1} \ket{\ba_0^{(1)}} \otimes \Ae_{-1} \ket{\ba_0^{(1)}} 
\wedge \ket{\ba_0^{(1)}} \\
&\quad \,+\, \ket{\ba_0^{(1)}} \otimes \ket{\ba_1^{(1)}} 
\otimes \ket{\ba_0^{(1)}} \otimes \Ae_{-1} \Ae_{-1} \ket{\ba_0^{(1)}} 
\wedge \ket{\ba_0^{(1)}} \\
&\quad \,-\, 2\ket{\ba_0^{(1)}} \otimes \ket{\ba_1^{(1)}} 
\otimes \ket{\ba_{-1}^{(1)}} \otimes \ket{\ba_0^{(1)}} 
\wedge \ket{\ba_1^{(1)}} \\
&\quad \,-\, 2\ket{\ba_0^{(1)}} \otimes \ket{\ba_1^{(1)}} 
\otimes \ket{\ba_1^{(1)}} \otimes \ket{\ba_0^{(1)}} 
\wedge \ket{\ba_{-1}^{(1)}} \\
&\quad \,+\, 8 \ket{\ba_1^{(1)}} \otimes \ket{\ba_0^{(1)}} 
\otimes \Ae_{-1} \ket{\ba_0^{(1)}} \otimes \Ae_{-1} \ket{\ba_0^{(1)}} 
\wedge \ket{\ba_0^{(1)}} \\
&\quad \,-\, 4 \ket{\ba_1^{(1)}} \otimes \ket{\ba_0^{(1)}} 
\otimes \ket{\ba_0^{(1)}} \otimes \Ae_{-1} \Ae_{-1} \ket{\ba_0^{(1)}} 
\wedge \ket{\ba_0^{(1)}} \\
&\quad \,+\, 8\ket{\ba_1^{(1)}} \otimes \ket{\ba_0^{(1)}} 
\otimes \ket{\ba_{-1}^{(1)}} \otimes \ket{\ba_0^{(1)}} 
\wedge \ket{\ba_1^{(1)}} \\
&\quad \,+\, 8 \ket{\ba_1^{(1)}} \otimes \ket{\ba_0^{(1)}} 
\otimes \ket{\ba_1^{(1)}} \otimes \ket{\ba_0^{(1)}} \wedge \ket{\ba_{-1}^{(1)}} \, .
\end{aligned}
\end{align}
The MTG corresponding to 
\beq
\Vir(\tfrac{6}{7},\tfrac{1}{21}) 
\otimes \Vir(\tfrac{4}{5},\tfrac{2}{3}) 
\otimes \Vir(\tfrac{7}{10},\tfrac{3}{2}) 
\otimes \Vir(\tfrac{1}{2},\tfrac{1}{2}) 
\otimes L(3\bL_0 + 2 \bL_1 + 7\bd) 
\eeq
is
\begin{align}
\begin{aligned}
\tilde{\Psi}_{3,\mathbf{3},1}^{\otimes (5)} &\,=\,
\sqrt{2} \ket{\ba_{0}^{(1)}} \otimes \ket{\ba_{0}^{(1)}} 
\otimes \Ae_{-2} \ket{\ba_{0}^{(1)}} \otimes \ket{\ba_{0}^{(1)}} 
\wedge \ket{\ba_1^{(1)}} \\
&\quad \,-\, 2 \ket{\ba_{0}^{(1)}} \otimes \ket{\ba_{0}^{(1)}} 
\otimes \Ae_{-1} \ket{\ba_{0}^{(1)}} \otimes \Ae_{-1} \ket{\ba_{0}^{(1)}} 
\wedge \ket{\ba_1^{(1)}} \\ 
&\quad \,-\, 2 \ket{\ba_{0}^{(1)}} \otimes \ket{\ba_{0}^{(1)}} 
\otimes \Ae_{-1} \ket{\ba_{0}^{(1)}} \otimes \ket{\ba_{0}^{(1)}} 
\wedge \Ae_{-1} \ket{\ba_1^{(1)}} \\ 
&\quad \,+\,2 \ket{\ba_{0}^{(1)}} \otimes \ket{\ba_{0}^{(1)}} 
\otimes \Ae_{-1} \ket{\ba_1^{(1)}} \otimes \Ae_{-1} \ket{\ba_{0}^{(1)}} 
\wedge \ket{\ba_{0}^{(1)}} \\ 
&\quad \,+\, 2 \ket{\ba_{0}^{(1)}} \otimes \ket{\ba_{0}^{(1)}} 
\otimes \Ae_{-1} \Ae_{-1} \ket{\ba_{0}^{(1)}} \otimes \ket{\ba_{0}^{(1)}} 
\wedge \ket{\ba_1^{(1)}} \\ 
&\quad \,+\, \sqrt{2} \ket{\ba_{0}^{(1)}} \otimes \ket{\ba_{0}^{(1)}} 
\otimes \ket{\ba_{0}^{(1)}} \otimes \Ae_{-2} \ket{\ba_1^{(1)}} 
\wedge \ket{\ba_{0}^{(1)}} \\ 
&\quad \,+\, \ket{\ba_{0}^{(1)}} \otimes \ket{\ba_{0}^{(1)}} 
\otimes \ket{\ba_{0}^{(1)}} \otimes \Ae_{-1} \Ae_{-1} \ket{\ba_{0}^{(1)}} 
\wedge \ket{\ba_1^{(1)}} \\ 
&\quad \,+\, \ket{\ba_{0}^{(1)}} \otimes \ket{\ba_{0}^{(1)}} 
\otimes \ket{\ba_{0}^{(1)}} \otimes \ket{\ba_{0}^{(1)}} 
\wedge \Ae_{-1} \Ae_{-1} \ket{\ba_1^{(1)}} \\ 
&\quad \,-\, \sqrt{2} \ket{\ba_{0}^{(1)}} \otimes \ket{\ba_{0}^{(1)}} 
\otimes \ket{\ba_1^{(1)}} \otimes \Ae_{-2} \ket{\ba_{0}^{(1)}} 
\wedge \ket{\ba_{0}^{(1)}} \\ 
&\quad \,-\, \ket{\ba_{0}^{(1)}} \otimes \ket{\ba_{0}^{(1)}} 
\otimes \ket{\ba_1^{(1)}} \otimes \Ae_{-1} \Ae_{-1} \ket{\ba_{0}^{(1)}} 
\wedge \ket{\ba_{0}^{(1)}} \\ 
&\quad \,+\,2 \ket{\ba_{0}^{(1)}} \otimes \ket{\ba_{0}^{(1)}} 
\otimes \ket{\ba_1^{(1)}} \otimes \ket{\ba_{-1}^{(1)}} 
\wedge \ket{\ba_1^{(1)}} \\ 
&\quad \,+\, 6 \ket{\ba_{0}^{(1)}} \otimes \ket{\ba_1^{(1)}} 
\otimes \Ae_{-1} \ket{\ba_{0}^{(1)}} \otimes \Ae_{-1} \ket{\ba_{0}^{(1)}} 
\wedge \ket{\ba_{0}^{(1)}} \\ 
&\quad \,-\, 3 \ket{\ba_{0}^{(1)}} \otimes \ket{\ba_1^{(1)}} 
\otimes \ket{\ba_{0}^{(1)}} \otimes \Ae_{-1} \Ae_{-1} \ket{\ba_{0}^{(1)}} 
\wedge \ket{\ba_{0}^{(1)}} \\ 
&\quad \,+\, 6 \ket{\ba_{0}^{(1)}} \otimes \ket{\ba_1^{(1)}} 
\otimes \ket{\ba_{-1}^{(1)}} \otimes \ket{\ba_{0}^{(1)}} 
\wedge \ket{\ba_1^{(1)}} \\ 
&\quad \,+\, 6 \ket{\ba_{0}^{(1)}} \otimes \ket{\ba_1^{(1)}} 
\otimes \ket{\ba_1^{(1)}} \otimes \ket{\ba_{0}^{(1)}} 
\wedge \ket{\ba_{-1}^{(1)}} \, .
\end{aligned}
\end{align}
The MTG corresponding to 
\beq
\Vir(\tfrac{6}{7},0) \otimes \Vir(\tfrac{4}{5},\tfrac{7}{5}) 
\otimes \Vir(\tfrac{7}{10},\tfrac{1}{10}) 
\otimes \Vir(\tfrac{1}{2},\tfrac{1}{2}) 
\otimes L(5\bL_0 + 8\bd) 
\eeq
is
\begin{align}
\begin{aligned}
\tilde{\Psi}_{2,\mathbf{1},0}^{\otimes (5)} &\,=\,
7\ket{\ba_{0}^{(1)}} \otimes \Ae_{-1} \ket{\ba_{0}^{(1)}} 
\otimes \ket{\ba_{0}^{(1)}} \otimes \Ae_{-1} \ket{\ba_{0}^{(1)}} 
\wedge \ket{\ba_{0}^{(1)}} \\
&\quad \,-\, \ket{\ba_{0}^{(1)}} \otimes \ket{\ba_{0}^{(1)}} 
\otimes \Ae_{-1} \ket{\ba_{0}^{(1)}} \otimes \Ae_{-1} \ket{\ba_{0}^{(1)}} 
\wedge \ket{\ba_{0}^{(1)}} \\
&\quad \,-\, 3\ket{\ba_{0}^{(1)}} \otimes \ket{\ba_{0}^{(1)}} 
\otimes \ket{\ba_{0}^{(1)}} \otimes \Ae_{-1} \Ae_{-1} \ket{\ba_{0}^{(1)}} 
\wedge \ket{\ba_{0}^{(1)}} \\
&\quad \,-\, \ket{\ba_{0}^{(1)}} \otimes \ket{\ba_{0}^{(1)}} 
\otimes \ket{\ba_{-1}^{(1)}} \otimes \ket{\ba_{0}^{(1)}} 
\wedge \ket{\ba_1^{(1)}} \\
&\quad \,-\, \ket{\ba_{0}^{(1)}} \otimes \ket{\ba_{0}^{(1)}} 
\otimes \ket{\ba_1^{(1)}} \otimes \ket{\ba_{0}^{(1)}} 
\wedge \ket{\ba_{-1}^{(1)}} \\
&\quad \,+\, 7\ket{\ba_{0}^{(1)}} \otimes \ket{\ba_{-1}^{(1)}} 
\otimes \ket{\ba_{0}^{(1)}} \otimes \ket{\ba_{0}^{(1)}} 
\wedge \ket{\ba_1^{(1)}} \\
&\quad \,+\, 7\ket{\ba_{0}^{(1)}} \otimes \ket{\ba_1^{(1)}} 
\otimes \ket{\ba_{0}^{(1)}} \otimes \ket{\ba_{0}^{(1)}} 
\wedge \ket{\ba_{-1}^{(1)}} \, .
\end{aligned}
\end{align}
The MTG corresponding to 
\beq
\Vir(\tfrac{6}{7},\tfrac{4}{3}) 
\otimes \Vir(\tfrac{4}{5},\tfrac{1}{15}) 
\otimes \Vir(\tfrac{7}{10},\tfrac{1}{10}) 
\otimes \Vir(\tfrac{1}{2},\tfrac{1}{2}) 
\otimes L(5\bL_0 + 8\bd) 
\eeq
is
\begin{align}
\begin{aligned}
\tilde{\tilde{\Psi}}_{2,\mathbf{1},0}^{\otimes (5)} &\,=\,
8\Ae_{-1} \ket{\ba_{0}^{(1)}} \otimes \ket{\ba_{0}^{(1)}} 
\otimes \ket{\ba_{0}^{(1)}} \otimes \Ae_{-1} \ket{\ba_{0}^{(1)}} 
\wedge \ket{\ba_{0}^{(1)}} \\ 
&\quad \,-\, \ket{\ba_{0}^{(1)}} \otimes \Ae_{-1} \ket{\ba_{0}^{(1)}} 
\otimes \ket{\ba_{0}^{(1)}} \otimes \Ae_{-1} \ket{\ba_{0}^{(1)}} 
\wedge \ket{\ba_{0}^{(1)}} \\ 
&\quad \,-\, \ket{\ba_{0}^{(1)}} \otimes \ket{\ba_{0}^{(1)}} 
\otimes \Ae_{-1} \ket{\ba_{0}^{(1)}} \otimes \Ae_{-1} \ket{\ba_{0}^{(1)}} 
\wedge \ket{\ba_{0}^{(1)}} \\ 
&\quad \,-\, 3\ket{\ba_{0}^{(1)}} \otimes \ket{\ba_{0}^{(1)}} 
\otimes \ket{\ba_{0}^{(1)}} \otimes \Ae_{-1} \Ae_{-1} \ket{\ba_{0}^{(1)}} 
\wedge \ket{\ba_{0}^{(1)}} \\ 
&\quad \,-\, \ket{\ba_{0}^{(1)}} \otimes \ket{\ba_{0}^{(1)}} 
\otimes \ket{\ba_{-1}^{(1)}} \otimes \ket{\ba_{0}^{(1)}} 
\wedge \ket{\ba_1^{(1)}} \\ 
&\quad \,-\, \ket{\ba_{0}^{(1)}} \otimes \ket{\ba_{0}^{(1)}} 
\otimes \ket{\ba_1^{(1)}} \otimes \ket{\ba_{0}^{(1)}} 
\wedge \ket{\ba_{-1}^{(1)}} \\ 
&\quad \,-\, \ket{\ba_{0}^{(1)}} \otimes \ket{\ba_{-1}^{(1)}} 
\otimes \ket{\ba_{0}^{(1)}} \otimes \ket{\ba_{0}^{(1)}} 
\wedge \ket{\ba_1^{(1)}} \\ 
&\quad \,-\, \ket{\ba_{0}^{(1)}} \otimes \ket{\ba_1^{(1)}} 
\otimes \ket{\ba_{0}^{(1)}} \otimes \ket{\ba_{0}^{(1)}} 
\wedge \ket{\ba_{-1}^{(1)}} \\ 
&\quad \,+\, 8\ket{\ba_{-1}^{(1)}} \otimes \ket{\ba_{0}^{(1)}} 
\otimes \ket{\ba_{0}^{(1)}} \otimes \ket{\ba_{0}^{(1)}} 
\wedge \ket{\ba_1^{(1)}} \\ 
&\quad \,+\, 8\ket{\ba_1^{(1)}} \otimes \ket{\ba_{0}^{(1)}} 
\otimes \ket{\ba_{0}^{(1)}} \otimes \ket{\ba_{0}^{(1)}} 
\wedge \ket{\ba_{-1}^{(1)}} \, .
\end{aligned}
\end{align}
The MTG corresponding to 
\beq
\Vir(\tfrac{6}{7},\tfrac{1}{21}) 
\otimes \Vir(\tfrac{4}{5},\tfrac{1}{15}) 
\otimes \Vir(\tfrac{7}{10},\tfrac{1}{10}) 
\otimes \Vir(\tfrac{1}{2},\tfrac{1}{2}) 
\otimes L(3\bL_0 + 2 \bL_1 + 9\bd) 
\eeq
is
\beq
\Psi_{1,\mathbf{3},1}^{\otimes (5)} \,=\, \ket{\ba_{0}^{(1)}} 
\otimes \ket{\ba_{0}^{(1)}} \otimes \ket{\ba_{0}^{(1)}} 
\otimes \ket{\ba_1^{(1)}} \wedge \ket{\ba_{0}^{(1)}} \, .
\eeq
The MTG corresponding to 
\beq
\Vir(\tfrac{6}{7},\tfrac{10}{21}) 
\otimes \Vir(\tfrac{4}{5},\tfrac{1}{15}) 
\otimes \Vir(\tfrac{7}{10},\tfrac{1}{10}) 
\otimes \Vir(\tfrac{1}{2},\tfrac{1}{2}) 
\otimes L(\bL_0 + 4 \bL_1 \ + 8\bd) 
\eeq
is
\begin{align}
\begin{aligned}
\Psi_{2,\mathbf{5},2}^{\otimes (5)} &\,=\,
\frac{1}{2} \ket{\ba_{0}^{(1)}} \otimes \ket{\ba_{0}^{(1)}} 
\otimes \ket{\ba_1^{(1)}} \otimes \ket{\ba_{0}^{(1)}} 
\wedge \ket{\ba_1^{(1)}} \\ 
&\quad \,+\, \frac{1}{2} \ket{\ba_{0}^{(1)}} \otimes \ket{\ba_1^{(1)}} 
\otimes \ket{\ba_{0}^{(1)}} \otimes \ket{\ba_{0}^{(1)}} 
\wedge \ket{\ba_1^{(1)}} \\ 
&\quad \,-\, \ket{\ba_1^{(1)}} \otimes \ket{\ba_{0}^{(1)}} 
\otimes \ket{\ba_{0}^{(1)}} \otimes \ket{\ba_{0}^{(1)}} 
\wedge \ket{\ba_1^{(1)}} \, .
\end{aligned}
\end{align}
The MTG corresponding to 
\beq
\Vir(\tfrac{6}{7},\tfrac{1}{7}) 
\otimes \Vir(\tfrac{4}{5},\tfrac{2}{5}) 
\otimes \Vir(\tfrac{7}{10},\tfrac{1}{10}) 
\otimes \Vir(\tfrac{1}{2},\tfrac{1}{2}) 
\otimes L(\bL_0 + 4 \bL_1 \ + 8\bd) 
\eeq
is
\begin{align}
\begin{aligned}
\tilde{\Psi}_{2,\mathbf{5},2}^{\otimes (5)} 
&\,=\, \ket{\ba_{0}^{(1)}} \otimes \ket{\ba_{0}^{(1)}} 
\otimes \ket{\ba_1^{(1)}} \otimes \ket{\ba_{0}^{(1)}} 
\wedge \ket{\ba_1^{(1)}} \\ 
&\quad \,-\, \ket{\ba_{0}^{(1)}} \otimes \ket{\ba_1^{(1)}} 
\otimes \ket{\ba_{0}^{(1)}} \otimes \ket{\ba_{0}^{(1)}} 
\wedge \ket{\ba_1^{(1)}} \, .
\end{aligned}
\end{align}

\section{Mapping the tensor algebra to the hyperbolic Lie algebra \texorpdfstring{$\mF$}{F}}
\label{sec:4}

Once we have determined the tensor DDF states in $\mTh^{(\ell)}$ we must map
them to the KMA level-$\ell$ sector $\mF^{(\ell)}$ by turning tensor products into
multi-commutators. For this purpose we define a generalization of the 
map \eqref{eq:Iell} by 
$\cJ_\ell \,:\, \mT^{(\ell)}\rightarrow\, \mF^{(\ell)}$ via
\beq
\mathcal{J}_\ell \,\coloneqq\,
\mathcal{I}_\ell \Big( ( \bE \otimes \mathcal{I}_{\ell-1}) \ldots 
\Big( (\underbrace{\bE \otimes \ldots \otimes \bE}_{(\ell-2) \text{ times}} 
\otimes \mathcal{I}_2) (u_1 \otimes \ldots \otimes u_\ell) \ldots \Big) \Big) 
\eeq
with $u_1, \dots , u_\ell \in \mathfrak{T}^{(1)}$; more concretely,
\beq\label{eq:cJell}
\cJ_\ell \big(u_1 \otimes \cdots \otimes u_\ell\big) \,\coloneqq\,
\big[ u_1 , \big[ u_2 , \dots \big[ u_{\ell -1}, u_\ell \big] \cdots\big]\big]
\,\in\; \mF^{(\ell)} \, .
\eeq 
Importantly, the map $\cJ_\ell$ commutes with the affine action, that is,
\beq\label{cJT}
\big[ \cJ_\ell \,,\, T_m \big] \,=\, 0
\eeq
since distributivity holds for both tensor products and commutators.
Therefore the affine modules are not affected by the transition
from $\mTh$ to $\mF$. The analog of \eqref{eq:FKerI} reads
\beq\label{eq:FKerJ}
\mF^{(\ell)} \,=\, \mTh^{(\ell)} \Big/ {\rm Ker}\,\cJ_\ell \, .
\eeq
As a consequence of the Jacobi identities and the
Serre relations the kernel of $\cJ_\ell$
is always non-trivial, as can already be seen with the level-2 example
$u_1 = e_{-1} \, , \, u_2 = [e_{-1}, e_0]$ with $e_i$ the Chevalley
generators of $\mF$, for which
\beq
\cJ_2 \Big(e_{-1} \wedge [e_{-1},e_0]\Big) \,=\, 
\big[ e_{-1} , [e_{-1},e_0]\big] \,=\, 0
\eeq
by the Serre relations for $\mF$. A related feature is the occurrence 
of linear dependencies in the image of $\cJ_\ell$, which occur as 
consequences of both the Jacobi identities and the Serre relations.
The simplest example is for $u_i\in \mT^{(1)}$
\beq
\cJ_3 (u_1\otimes u_2 \wedge u_3) \,+\,
\cJ_3 (u_3\otimes u_1 \wedge u_2) \,+\,
\cJ_3 (u_2\otimes u_3 \wedge u_1) \,=\, 0 \, .
\eeq
However, the majority of elements of the kernel will not take such a
simple form: when inserting the Jacobi identity anywhere in a multi-commutator
we must first rewrite the multi-commutator in the canonical nested form 
\eqref{eq:cJell} of level-one elements which yields a more complicated
expression,  whose vanishing is not immediately obvious by inspection. Similar
remarks apply to linear dependencies due to the Serre relations. For clarity
of notation we will drop the symbol $\otimes$ after the conversion, {\em viz.}
\beq\label{psiell}
\Psi^{(\ell)} \,= \, \cJ_\ell \; \Psi^{\otimes(\ell)} \, \in \, \mF^{(\ell)}
\eeq
We will
generally refer to a tensor state $\Psi^{\otimes(\ell)}$ that maps to zero,
{\em i.e.} for which
\beq\label{virtual}
\cJ_\ell \; \Psi^{\otimes(\ell)} \,=\, 0 \, ,
\eeq
as a {\em virtual state}. The main challenge in understanding 
$\mF^{(\ell)}$ is thus to understand the kernel of $\cJ_\ell$, or
equivalently the characterization of virtual states in $\mTh^{(\ell)}$. 

We shall see below in section \ref{sec:4.4} that all MTGs 
(at every level $> 1$) map 
to zero in $\mF^{(\ell)}$, which also means that the associated 
affine modules are absent. One therefore has to apply Virasoro 
operators to reach affine modules in $\mT^{(\ell)}$ whose 
image in $\mF^{(\ell)}$ does not vanish. The fact that non-vanishing 
elements of (combinations of states in) the Virasoro 
module can be mapped to zero leads to presence of `Virasoro holes' in
the KMA, and it is this feature which destroys the
Virasoro module structure of $\mF^{(\ell)}$.

\subsection{Converting tensor products to multi-commutators}
\label{sec:4.1}

For the conversion of tensor products into KMA elements we employ 
the vertex operator algebra (VOA) prescription, and the fact that
the KMA can be embedded in a Hilbert space $\mH$ of 
physical string states \cite{GN} as explained in \cite{CKMN}:
this is the quotient space
\beq\label{eq:mH}
\mH \,\coloneqq\, \mP_1 / \rL_{-1}\mP_0 \, ,
\eeq
where the spaces $\mP_n$ for $n=0,-1$ are defined by
\beq
\vp\,\in\,\mP_n \quad \Leftrightarrow \quad \rL_m\vp \,=\, 0 
\quad (m\geq 1) \quad \text{and} 
\quad (\rL_0 - n)\vp\,=\,0 \,.
\eeq
Here, we are employing standard notation from string theory~\cite{Scherk} 
with $\rL_m = \frac12 \sum_n :\alpha_{n\mu} \alpha_{m-n}^\mu:$,
where the $\alpha^\mu_m$ are the usual string oscillators (for $\mu=0,1,2$). 
For all levels, the affine generators are physical in the sense 
that~\footnote{Notice that this statement is very different from (\ref{eq:LmT})\,!}
\beq
\big[ \rL_m \,,\, T_n \big] 
\,=\, 0
\eeq
for all $m,n\in \mathbb{Z}$. Hence, all affine, and therefore all Sugawara
Virasoro actions preserve $\mH$.

The commutator between any two elements $\vp, \psi \in \mH$ is 
defined via the state-operator correspondence through the formula
\beq\label{eq:StateOperator}
\cJ_2 ( \vp\otimes \psi) \,=\, [\vp\,,\, \psi ] \,\coloneqq\,
\oint\frac{\mathrm{d}z}{2\pi i} \ \cV(\vp;z) \, \psi \, ,
\eeq
where $\cV(\vp;z)$ is the vertex operator associated to the 
state $\vp\in\mH$. Similarly, multiple commutators appearing in $\cJ_\ell$
correspond to the iterated application of vertex operators.
As shown in~\cite{Borcherds,IF,FLM} (see also \cite{GN})
this definition satisfies all the requisite 
properties of a Lie bracket, to wit, antisymmetry and the Jacobi identity, 
{\em modulo} elements of $\rL_{-1}\mP_0$. Furthermore, it automatically
takes care of the Serre relations, in the sense that \eqref{eq:StateOperator}
will simply give zero on the quotient (\ref{eq:mH}) if the Serre relation 
is anywhere contained in a
multi-commutator. In other words, in the VOA formalism we need not
worry about either Jacobi identities or Serre relations. In particular
using the Free Lie Algebra as an intermediate step in the construction
of $\mF$ as in more conventional approaches is not necessary anymore.
In practice, the evaluation of all multi-commutators by means of 
\eqref{eq:StateOperator} is done with the Mathematica package \cite{DDFPackage}
for all examples presented below.

When descending from $\mTh^{(\ell)}$ to $\mF^{(\ell)}$ by converting
tensor products into Lie algebra commutators, we return from the 
multi-string Fock space to a subspace of the one-string Hilbert space $\mH$, 
which, however, is now much larger than $\mT^{(1)}$. In particular, for 
$\ell>1$ this larger subspace
comprises both transversal {\em and} longitudinal states built on the 
level-$\ell$ tachyonic states with momenta
\beq\label{eq:awl}
\ba_w^{(\ell)} \,=\, -\, \ell\brr - \left(\ell \,+\, \frac{w^2 -1}{\ell} \right) 
\bd \,+\, w \br_1 \, , \qquad w\in\mathbb{Z} \, ,
\eeq 
which in general do not belong to the $\mF$ root lattice any more.
In summary, the space of physical states $\mH$ at level $\ell$ is the 
linear span of state
\beq
\prod_{i=1}^M \Al_{-m_i} \prod_{j=1}^N \Bl_{- n_j} \, \ket{ \ba^{(\ell)}_n} 
\qquad
\text{for} \quad 
\begin{array}{l}
m_1 \ge \ldots \ge m_M \ge 1 \, , \\
n_1 \ \ge \ldots \ge n_N \  \ge 2 \, , 
\end{array} \quad
\quad 
\text{and}
\quad 
M, N \ge 0 \, ,
\eeq
where the transversal and longitudinal DDF operators $\Al_m$ and $\Bl_m$
are written out explicitly in section~3 of \cite{CKMN}. Note that application
of the level-$\ell$ DDF operators shifts the momentum by the fractional
amount $m\bd/\ell$. While the states in $\mT^{(\ell)}$ are built on 
discrete momenta corresponding to the set of level-one roots, the discreteness 
is thereby diluted at higher levels by the need to introduce intermediate 
momenta between root lattice points, which fill the continuum more 
and more densely as $\ell\rightarrow\infty$. Even though these 
intermediate momenta do not appear in $\mF$, for which all momenta
must lie on the root lattice, their presence is indispensable for the present
approach. Let us also mention that there is a multi-string perspective
on the computation of commutators via string scattering \cite{GNW}.

\subsection{MTG descendants and Lie algebra elements}
\label{sec:4.2}

For illustration we map some of the MTGs from subsection \ref{sec:3.2}
to $\mF$.  For each of the MTG we are interested in identifying the the 
corresponding highest weight state in $\mF$. Since all the MTGs are virtual, 
{\em i.e.} obey (\ref{virtual}), we must add some coset Virasoro operator
insertions to obtain non-vanishing states. 
Below, in the subsections
\ref{sec:4.3} - \ref{sec:4.4} we address the question which insertions
yield unique non-vanishing elements in $\mF$ (after applying $\cJ_\ell$).
In this subsection we will simply list some results.

It is useful to combine all $\ell-1$ coset Virasoro operators acting on the
states of $\mT^{(\ell)}$ into on {\em coset Virasoro tower operator}. To this
end let $\{\bm_1 , \ldots ,\bm_{\ell-1}\}$ be a
set of multi-indices with 
$\bm_i = (m_{i,1}, \ldots, m_{i,n_i}) \in \mathbb{Z}^{n_i}$
and we define the operator
\beq
\cLt{\ell}{\{\bm_1, \ldots, \bm_{\ell-1}\}} \psi^{\otimes (\ell)} \,=\, 
\cL{\ell}{\bm_1} 
\Big( \bE \otimes \cL{\ell-1}{\bm_2} \Big( \ldots
\Big( \bE \otimes \ldots \otimes \bE \otimes 
\cL{2}{\bm_{\ell-1}}  \psi^{\otimes (\ell)}  \Big) \ldots \Big)\Big)\, , 
\eeq
where each multi-indexed coset Virasoro operator is given 
by the consecutive actions
\beq
\underbrace{\bE \otimes \ldots \otimes \bE}_{(\ell-k) \text{ factors}} \otimes \cL{k}{\bm_{\ell-k+1}} \psi^{\otimes (\ell)}  
\,=\, 
\bE \otimes \ldots \otimes \bE \otimes 
\cL{k}{m_{\ell-k+1,1}} \cdots \cL{k}{m_{\ell-k+1,n_{\ell-k+1}}} 
\psi^{\otimes (\ell)} \, . 
\eeq
For multi-indices of length 1 in $\{\bm_1 , \ldots ,\bm_{\ell-1}\}$ we simply
write normal indices, {\em i.e.} without braces. To label the
absence of level-$k$ coset Virasoro operators in the coset Virasoro tower
operator we introduce the special notation
$\{\bm_1, \ldots , \bm_{\ell-k} , \bullet, \bm_{\ell-k+2}, \ldots , \bm_{\ell-1}\}$
with
\beq
 \bE \otimes \ldots \otimes \bE \otimes \cL{k}{\bullet}  \psi^{\otimes (\ell)}  \,=\, 
\psi^{\otimes (\ell)} \, . 
\eeq
We will see below that there are linear relations between different
coset Virasoro tower operators related to the null vectors of the associated
Virasoro representations. The affine highest weight states of $\mF^{\ell}$ 
are related to the MTGs of $\mTh^{(\ell)}$ via the operator actions
\begin{align}
\cJ_\ell \ \Big( \cLt{\ell}{\{\bm_1, \ldots, \bm_{\ell-1}\}} \Psi^{\otimes (\ell)} \Big)
\end{align}
with all possible sets of multi-indices $\{\bm_1, \ldots, \bm_{\ell-1}\}$.

The descendant of the level 2 MTG in~\eqref{eq:mtg2} is related to the highest
weight state of $\mF^{(2)}$ via a coset Virasoro insertion of degree one, 
{\em i.e.}
\beq
\cJ_2 \, \Big( \cLt{2}{\{-1\}} \, \Psi_{1,\bD,\pm1}^{\otimes(2)} \Big) 
\,=\, \cJ_2 \, \Big( \cL{2}{-1} \, \Psi_{1,\bD,\pm1}^{\otimes(2)} \Big)
\,=\, 4 \ket{\ba_{\pm 1}^{(2)}} \, .
\eeq
Similarly the two level 3 MTGs \eqref{eq:mtg31} and \eqref{eq:mtg32}
are related to the highest weight singlet respectively triplet of $\mF^{(3)}$ 
via a coset Virasoro insertion of degree 2, {\em i.e.}
\begin{align}
\begin{aligned}
\cJ_3 \, \Big( \cLt{3}{\{-1,-1\}}  
\, \Psi_{2,\bE,0}^{\otimes(3)} \Big) 
&=\, \cJ_3 \, \Big( \cL{3}{-1} \, \left( \bE \otimes \cL{2}{-1} \right) 
 \, \Psi_{2,\bE,0}^{\otimes(3)} \Big) \\ 
&=\, \Big[-\, \frac{8}{3} \ \Ad_{-2} \Ad_{-2} 
\,+\, \ 6 \Ad_{-1} \Ad_{-1} \Ad_{-1} \Ad_{-1} \\
&\quad \ \ -\, \frac{28}{3} \ \Ad_{-1} \Ad_{-1} \Bd_{-2} \\
&\quad \ \ +\, \frac{7}{9} \ \Bd_{-2} \Bd_{-2} \,-\, \frac{7}{18} \ \Bd_{-4} \Big] 
\ket{\ba_0^{(3)}} 
\end{aligned}
\end{align}
and
\beq
\cJ_3 \, \Big( \cLt{3}{\{-1,-1\}}  \, \Psi_{1,\bD,1}^{\otimes(3)} \Big)
 \,=\, -\, 4 \ket{\ba_1^{(3)}} \, .
\eeq
The highest weight triplet MTG on level 4 \eqref{eq:mtg44} is related to the
highest weight state in $\mF^{(4)}$ via
\beq
\cJ_4 \,  \Big( \cLt{4}{\{-1,-1,-1\}}    \, \Psi_{1,\bD,1}^{\otimes(4)} \Big)  \,=\, 
\frac{16}{5} \ket{\ba_1^{(4)}} \, .
\eeq
For the two singlet MTGs \eqref{eq:mtg41} and \eqref{eq:mtg43} we find
\beq\label{eq:singlet0}
\cJ_4 \,  \Big( \cLt{4}{\{-1,-1,-1\}} 
 \, \Psi_{2,\bE,0}^{\otimes(4)} \Big) \,=\, 0
\eeq
and
\begin{align}
\begin{aligned}
\cJ_4 \,  &\Big(\cLt{4}{\{-1,-1,-1\}}  \, \bar \Psi_{2,\bE,1}^{\otimes(4)} \Big) \\
 \,=\, \Bigg[ &-\,\frac{63}{20} \ \Bv_{-5} \,-\, 21 \ \Av_{-3} \Av_{-2} 
\,+\, \frac{63}{20} \ \Bv_{-3} \Bv_{-2} 
\,-\, \frac{126}{5} \ \Av_{-2} \Av_{-1} \Bv_{-2} \\
&\,-\,\frac{126}{5} \ \Av_{-1} \Av_{-1} \Bv_{-3} 
\,+\, \frac{224}{5} \ \Av_{-2} \Av_{-1} \Av_{-1} \Av_{-1} \Bigg] 
\ket{\ba_{0}^{(4)}} \, .
\end{aligned}
\end{align}
For the first fiveplet MTG \eqref{eq:mtg43} we find
\beq\label{eq:fiveplet0}
\cJ_4 \,  \Big( \cLt{4}{\{-1,-1,-1\}}  \, \Psi_{2,\bF,2}^{\otimes(4)}  \Big) \,=\, 0 \, .
\eeq
For the other triplet MTG on level 4 \eqref{eq:mtg42} we find
\begin{align}
\begin{aligned}
\cJ_4 \,  \Big( \cLt{4}{\{\bullet,-1,-1\}}  \,  \Psi_{3,\bD,1}^{\otimes(4)} \Big)  
&\,=\, \cJ_4 \, \left( \bE \otimes \cL{3}{-1} \, 
\left( \bE \otimes \bE \otimes \cL{2}{-1} 
 \, \Psi_{3,\bD,1}^{\otimes(4)} \right) \right)\\
&\,=\, \Bigg[-\, \frac{45\sqrt{2}}{16} \ \Bv_{-4} 
\,-\, \frac{576\sqrt{2}}{16} \ \Av_{-3} \Av_{-1} 
\,-\, \frac{201\sqrt{2}}{16} \ \Av_{-2} \Av_{-2} \\
&\quad \quad \,-\, \frac{783}{16} \ \Av_{-2} \Bv_{-2} 
\,-\, \frac{45}{4} \ \Av_{-1} \Bv_{-3} 
\,+\, \frac{975\sqrt{2}}{128} \ \Bv_{-2} \Bv_{-2}  \\
&\quad \quad \,+\, \frac{717}{8} \ \Av_{-2} \Av_{-1} \Av_{-1}
\,-\, \frac{771\sqrt{2}}{32} \ \Av_{-1} \Av_{-1} \Bv_{-2} \\
&\quad \quad \,-\, 
\frac{915\sqrt{2}}{32} \ \Av_{-1} \Av_{-1} \Av_{-1} \Av_{-1} \Bigg]
\ket{\ba_{1}^{(4)}} \, .
\end{aligned}
\end{align}
While the the insertions on level 2 and 3 follow a clear pattern, on level 4 
this is no longer the case. We will give a partial explanation for this behavior 
in subsection \ref{sec:4.4}. All these results are in perfect agreement with the
results presented in \cite{CKMN} (up to unimportant normalizations).

The level 5 states of $\mF$ are currently out of reach for our computational
tools.

\subsection{Back to the Feingold-Frenkel algebra}
\label{sec:4.3}

We have already pointed out that the KMA $\mF$ can be obtained from the
tensor algebra $\mT$ by first computing the Free Lie Algebra $F$ and 
subsequently dividing out the ideal $\mJ$ generated by the Serre 
relations. Here we briefly review these steps and collect some results
from the literature.

In the following let $F = \bigoplus_{\ell \in \mathbb{N}} F^{(\ell)}$ be the
Free Lie Algebra and $\rI_\ell \colon \mT^{(\ell)} \to F^{(\ell)}$ the map from
the tensor algebra to the Free Lie Algebra.\footnote{Notice that $\rI_\ell$ is
different from the map $\cI_\ell$ introduced above that maps directly to the
KMA $\mF^{(\ell)}$.} The kernel of $\rI_\ell$ consists of those states related
by anti-symmetry and the Jacobi identity. In \cite{BB} it is explained how to 
determine the Free Lie Algebra $F$ to any level $\ell$. Formally, we can write
\beq\label{eq:FreeLA}
F^{(\ell+1)} \,=\, L \otimes F^{(\ell)} \,-\, \mathrm{Ker} \; \rI_{\ell + 1} 
\eeq
with $F^{(1)} = L$. Here and in the following we use a minus sign to indicate
the quotient. To reach the KMA $\mF$ we must subsequently divide out the
Serre ideal
\beq
\mJ \,=\, \mJ_+ \oplus \mJ_-, \quad \text{with} \quad
\mJ_\pm \,=\,  \bigoplus_{\ell\geq 2} \mJ_{\pm \ell}
\eeq
generated by $\mJ_2 = L(2\bL_1 + 3 \bd)$. This step is highly non-trivial
because this ideal has a non-zero intersection with the kernel of $\rI_\ell$.
Finding the intersection requires an extremely elaborate analysis and so far
has only been achieved up to level 5 \cite{Kang3}. Formally, it is easy
enough to write out the analog of \eqref{eq:FreeLA} for the KMA \cite{BB}
\beq
\mathfrak{F}^{(\ell+1)} \,=\, L \otimes F^{(\ell)} \,-\, \mathrm{Ker} 
\; \rI_{\ell + 1} \,-\, \mJ_{\ell + 1}
\eeq
with
\beq
\mJ_{\ell + 1} \,=\, L \otimes \mJ_\ell 
\,-\, (L \otimes \mJ_\ell) \cap \mathrm{Ker} \; \rI_{\ell + 1} \, .
\eeq
For the first five levels of $\mF$ we find the vector space isomorphisms
\begin{align}
\begin{aligned}\label{eq:FLevels}
\mF^{(1)} &\,=\, \mTh^{(1)} \, , \\
\mF^{(2)} &\,\cong \, \mTh^{(2)} \,-\, \mJ_2 \, , \\
\mF^{(3)} &\, \cong \,\mTh^{(3)} \,-\, \wedge^3 L \,-\, \mJ_3 \, , \\
\mF^{(4)} &\, \cong \, \mTh^{(4)} \,-\, L \otimes (\wedge^3 L) 
\,-\, (S^2(\wedge^2 L) 
\,-\, \wedge^4 L) \,-\, \mJ_4 \, , \\
\mF^{(5)} &\, \cong \, \mTh^{(5)} \,-\, L \otimes L \otimes (\wedge^3 L) 
\,-\, (L \otimes S^2(\wedge^2 L) \,-\, L \otimes (\wedge^4 L)) 
\,-\, L \otimes (\wedge^2 L) \wedge (\wedge^2 L) \\
&\quad \,-\, (\wedge^3 L) \otimes (\wedge^2 L) \,-\, \wedge^5 L \,-\, \mJ_5 \, .
\end{aligned}
\end{align}
The loss of the Virasoro representation structure of the l.h.s. here is
manifest from the fact that the terms subtracted on the r.h.s. of 
\eqref{eq:FLevels} do not constitute Virasoro modules. The low level ideals are
given by
\begin{align}
\begin{aligned}
\mJ_3 &\,=\, L \otimes \mJ_2 \,-\, (L \otimes \mJ_2) \cap (\wedge^3 L) \, , \\
\mJ_4 &\,=\, L \otimes \mJ_3 \,-\, (L \otimes \mJ_3) \cap 
\left( S^2(\wedge^2 L) \,-\, \wedge^4 L \right) \, , \\
\mJ_5 &\,=\, L \otimes \mJ_4 \,-\, (L \otimes \mJ_4) \cap 
\left( L \otimes (\wedge^2 L) \wedge (\wedge^2 L) 
\oplus (\wedge^3 L) \otimes (\wedge^2 L) \oplus \wedge^5 L \right) \, .
\end{aligned}
\end{align}
While these formulas can be obtained systematically and in closed form to any
desired level, the main difficulty here is in actually computing the
intersections. For $\mJ_3$ the intersection term can be obtained from a general
anlysis of the two vector spaces involved and the Virasoro representations
acting on them (see \cite{BB} for details). For higher $\mJ_\ell$, 
Kang \cite{Kang3} has developed a homological theory and used Hochschild-Serre
spectral sequences to determine the intersections. For $\mF$ he presented
results up to level 5
\begin{align}
\begin{aligned}
&(L \otimes \mJ_2) \cap (\wedge^3 L ) \,=\, 0 \, , \\
&(L \otimes \mJ_3) \cap \left( S^2(\wedge^2 L) \,-\, \wedge^4 L \right) 
\,=\, L(4\bL_1 + 5\bd) \oplus S^2(J_2) \, ,\\
&(L \otimes \mJ_4) \cap \left( L \otimes (\wedge^2 L) \wedge (\wedge^2 L) 
\oplus (\wedge^3 L) \otimes (\wedge^2 L) \oplus \wedge^5 L \right) 
\,=\, L \otimes (\wedge^2 \mJ_2) \, .
\end{aligned}
\end{align}
Our construction of using the DDF states and a vertex operator algebra has the
advantage, that we have an automatic map $\cJ_\ell$ from $\mTh^{(\ell)}$ to
$\mF^{(\ell)}$ at any level $\ell$. However, since $\mTh^{(\ell)}$ is much
larger than $\mF^{(\ell)}$ many states are either in the kernel of $\cJ_\ell$ or
are mapped to the same elements in $\mF^{(\ell)}$. Hence, combining the DDF 
construction with the traditional approach 
\eqref{eq:F} can provide a way to obtain a minimal set of states in
$\mTh^{(\ell)}$ on which the map $\cJ_\ell$ is bijective.

We note that the apparent simplicity of formulas like \eqref{eq:FLevels}
is a bit deceptive, as they do not contain enough information to perform
the formal subtraction of representations in case representations
appear several times (although they may suffice for the 
computation of multiplicities). This difficulty was already noted in \cite{BB}.
It is here that the DDF representation gives a much better handle on this
problem, because there are no such ambiguities in the DDF states.

In the following, we give a rough outline of the program to determine the
minimal set of states in $\mTh$ on which $\cJ_\ell$ is bijective. In particular
we will provide some explanations of the vanishing of
\eqref{eq:singlet0} and \eqref{eq:fiveplet0}. However, many questions must
remain unanswered at the moment and we leave the further elaboration of 
the results in the following subsections for future work.

\subsection{Some examples}
\label{sec:4.4}

For $|\ell|\geq 2$, each $\mF^{(\ell)}$ consists of an infinite direct sum of
affine modules. The action of the affine generators $T_m$ commutes with the 
map $\cJ_\ell$, as follows immediately from the distributivity of 
the affine action on tensor products and commutators. Therefore, 
it is enough to map the affine ground state of each of these 
affine modules into $\mF$.

To simplify the following discussion, we will look at a concrete example,
namely, the triplet modules on level 3. In the tensor algebra these modules are
described by
\beq\label{eq:T3Term}
\Vir(\tfrac{7}{10},\tfrac{1}{10}) 
\otimes \Vir(\tfrac{1}{2},\tfrac{1}{2}) 
\otimes L(\bL_0 + 2 \bL_1 + 5\bd) \, .
\eeq
The MTG of these modules is \eqref{eq:mtg31}
\beq
\Psi_{1,\mathbf{3},1}^{\otimes(3)} 
\,=\, \ket{\ba_0^1} \otimes \ket{\ba_1^1} \wedge \ket{\ba_0^1}
\eeq
and we obtain the set of all affine triplet ground states of $\mT^{(3)}$ by the
action of $\cLt{3}{\bm_1,\bm_2}$ on 
$\Psi_{1,\mathbf{3},1}^{\otimes(3)}$ for all $\bm_1, \bm_2$. 

The map $\cJ_3$ has a non-trivial kernel on \eqref{eq:T3Term}. In particular,
it is easy to see that $\cJ_3 \Psi_{1,\bD,1}^{\otimes(3)} = 0$ as the
ground state momentum to this state obeys $(-3\brr - \bd + \br_1)^2 >2$.
In the following we will explain how \eqref{eq:F} helps us to 
determine this kernel without having
to do any actual calculations. 

The first observation is that there is a one-to-one correspondence between
Virasoro character and applications of coset Virasoro operators $\cL{\ell}{-m}$.
For example consider the Virasoro character (see \eqref{eq:VirCh})
\beq
\Ch \Vir(\tfrac{7}{10},\tfrac{1}{10}) 
\,=\, 1 \,+\, q \,+\, q^2 \,+\, q^3 \,+\, 2 q^4 \,+\, \ldots 
\eeq
Each term $n q^m$ tells us that there are $n$ linearly independent combinations
of Virasoro operators $\cL{3}{-m}$, $\cL{3}{-m_1} \cL{3}{-m_2}, \ldots$ with
total degree $m$. For example the coefficient of $q^2$ is 1, so $\cL{3}{-2}$
and $\cL{3}{-1} \ \cL{3}{-1}$ must be related. Or in other words the Virasoro
representation $\Vir(\tfrac{7}{10},\tfrac{1}{10})$ contains a null vector at
degree 2. Indeed we find
\beq
\left( \cL{3}{-2} \,-\, \frac{5}{4} \ \cL{3}{-1} \ \cL{3}{-1} \right) 
\Psi_{1,\mathbf{3},1}^{\otimes(3)} \,=\, 0 \, ,
\eeq
which just reproduces the usual null vector of the Virasoro module.
Similarly, we find that also the level 2 Virasoro representation 
$\Vir(\tfrac{1}{2},\tfrac{1}{2})$ has a null vector at degree 2.
Multiplying the two characters yields
\begin{align}
\begin{aligned}\label{eq:qSeries1}
&\Ch \Vir(\tfrac{7}{10},\tfrac{1}{10}) \
\Ch \Vir(\tfrac{1}{2},\tfrac{1}{2}) \\
&\,=\, (1 \,+\, q \,+\, q^2 \,+\, q^3 \,+\, 2 q^4 \,+\, \ldots ) \ 
(1 \,+\, q \,+\, q^2 \,+\, q^3 \,+\, 2 q^4 \,+\, \ldots ) \\
&\,=\, 1 \,+\, 2 q \,+\, 3q^2 \,+\, 5 q^3 \,+\, 9 q^4 \,+\, \ldots 
\end{aligned}
\end{align}
Thus, we deduce that the following three states must be
linearly independent in $\mT^{(3)}$
\beq\label{eq:3States}
\cLt{3}{\{-2,\bullet\}} \Psi_{1,\mathbf{3},1}^{\otimes(3)} \, , \quad 
 \cLt{3}{\{\bullet,-2\}}  \Psi_{1,\mathbf{3},1}^{\otimes(3)} \, , \quad 
\cLt{3}{\{-1,-1\}} \Psi_{1,\mathbf{3},1}^{\otimes(3)} \, .
\eeq
An explicit calculation confirms this. To answer the question
which of these states belong to the minimal set of states from
$\mTh^{(3)}$, on which the action of $\mJ_3$ is bijective, we
study \eqref{eq:F}. 

Reducing $\mTh^{(\ell)}$ to the Free Lie Algebra $F^{(\ell)}$ and subsequently
the Kac-Moody algebra $\mF^{(\ell)}$ only affects the Virasoro algebras
that multiply the affine modules but not the modules themselves.
Concretely the characters of the the Virasoro algebras
receive subtractions due to the Jacobi identity and the Serre relation.
In our example \eqref{eq:qSeries1} then becomes
\begin{align}
\begin{aligned}\label{eq:qSeries2}
&\Ch \Vir(\tfrac{7}{10},\tfrac{1}{10}) \,
\left( \Ch \Vir(\tfrac{1}{2},\tfrac{1}{2}) 
{\color{myblue} -1 } \right) 
 \color{myred} - (q \,+\, q^2 \,+\, 2q^3 \,+\, 3q^4 \,+\, \ldots)\\
&\,=\, (1 \,+\, q \,+\, q^2 \,+\, q^3 \,+\, 2 q^4 \,+\, \ldots ) \ 
 (q \,+\, q^2 \,+\, q^3 \,+\, 2 q^4 \,+\, \ldots ) 
 \color{myred} - (q \,+\, q^2 \,+\, 2q^3 \,+\, 3q^4 \,+\, \ldots) \\
&\,=\, \, q^2 \,+\, q^3 \,+\, q^4 \,+\, \ldots \, .
\end{aligned}
\end{align}
Here the {\color{myblue}blue} term is due to the Serre ideal 
$\mJ_2 = L \otimes \mJ_2$ and the {\color{myred}red} term is due to Jacobi
identity $\wedge^3L$. The $q$-series are easily obtained by evaluating
the characters of $L \otimes \mJ_2$ and $\wedge^3L$. In general such terms 
do not have a nice representation in terms of the character of 
minimal model Virasoro algebras.

Several observations can be made from \eqref{eq:qSeries2}. Firstly, the states
\beq
\Psi_{1,\mathbf{3},1}^{\otimes(3)} \, , \quad
\cLt{3}{\{-1,\bullet\}} \Psi_{1,\mathbf{3},1}^{\otimes(3)} \, , \quad 
\cLt{3}{\{\bullet,-1\}} \Psi_{1,\mathbf{3},1}^{\otimes(3)} 
\eeq
are all virtual, {\em i.e.} they are in the kernel of $\cJ_3$ because
the coefficients of $q^0$ and $q^1$ in \eqref{eq:qSeries2} are 0.
Since the affine generators commute with $\cJ_\ell$, the entire
affine modules associated to these three states are virtual.

From the {\color{myblue} $-1$} subtraction of the level 2 Virasoro
character we learn that all states which do not contain at
least one $\cL{2}{-m}$ with $m\ge 1$ are also in the kernel of 
$\cJ_3$. Thus, without any calculation we now know that 
\beq
\cJ_3 \left(\cLt{3}{\{-2,\bullet\}} \Psi_{1,\mathbf{3},1}^{\otimes(3)} \right) 
\,=\, 0 \, .
\eeq
This observation extends to all levels and we can conclude
that for this reason all MTGs must be virtual, {\em i.e.} in the
kernel of $\cJ_\ell$.

Finally, the coefficient of $q^2$ in \eqref{eq:qSeries2} is 1, so when mapped
to $\mF^{(3)}$ the other two states in \eqref{eq:3States} will be related.
Indeed an explicit calculations shows that 
\beq
\cJ_3 \left(  \cLt{3}{\{\bullet,-2\}} \Psi_{1,\mathbf{3},1}^{\otimes(3)} \right) 
\,=\, 3 \ \cJ_3 \left( \cLt{3}{\{-1,-1\}} \Psi_{1,\mathbf{3},1}^{\otimes(3)} \right)
\,=\, -\, 12 \ket{\ba_1^{(3)}} \, .
\eeq
Hence we conclude that of the three states in \eqref{eq:3States} only the last one
is relevant for the construction of the KMA $\mF$. 

Repeating the above analysis on level 4 reveals that these arguments
become a lot more cumbersome on higher levels. 
Using \eqref{eq:F} as well as our explicit results for the characters in
\cite{CKMN} we obtain the following expression for the character of $\mF^{(4)}$

\begin{footnotesize}
\begin{align}\label{eq:Ch4Long}
\begin{aligned}
\Ch \mF^{(4)} &\,=\, \bigg[ \Ch \Vir(\tfrac{4}{5},0) \\
&\quad \quad \quad \times \bigg( \Ch \Vir(\tfrac{7}{10},\tfrac{3}{2}) 
\left( \Ch \Vir(\tfrac{1}{2},\tfrac{1}{2}) {\color{myblue} \,-\, 1} \right) 
{\color{myred}\,-\, \left( q \,+\, q^2 \,+\, 2 q^3 \,+\, 3 q^4 \,+\, 5 q^5 
\,+\, 7 q^6 \,+\, 12 q^7 \,+\, 16 q^8 \,+\, \ldots \right)} \bigg) \\
&\quad \quad \,+\, \Ch \Vir(\tfrac{4}{5},\tfrac{7}{5}) \\
&\quad \quad \quad \times 
\bigg( \Ch \Vir(\tfrac{7}{10},\tfrac{1}{10}) 
\left( \Ch \Vir(\tfrac{1}{2},\tfrac{1}{2}) {\color{myblue} \,-\, 1} \right)
{\color{myred}\,-\, \left( q \,+\, q^2 \,+\, 2 q^3 \,+\, 3 q^4 \,+\, 5 q^5 
\,+\, 7 q^6 \,+\, 12 q^7 \,+\, 17 q^8 \,+\, \ldots \right)} \bigg) \\
&\quad \quad {\color{mygreen} \,-\, \left( 2 q^2 \,+\, 3 q^3 \,+\, 7 q^4 
\,+\, 12 q^5 \,+\, 25 q^6 \,+\, 39 q^7 \,+\, 71 q^8 \,+\, \ldots \right)} \bigg] 
\Ch L(4\bL_0 + 6\bd) \\
&\quad \,+\, \bigg[ \Ch \Vir(\tfrac{4}{5},\tfrac{2}{3}) \\
&\quad \quad \quad \times \bigg( \Ch \Vir(\tfrac{7}{10},\tfrac{3}{2}) 
\left( \Ch \Vir(\tfrac{1}{2},\tfrac{1}{2}) {\color{myblue} \,-\, 1} \right)
{\color{myred}\,-\, \left( q \,+\, q^2 \,+\, 2 q^3 \,+\, 3 q^4 \,+\, 5 q^5 
\,+\, 7 q^6 \,+\, 12 q^7 \,+\, 16 q^8 \,+\, \ldots \right)} \bigg) \\
&\quad \quad \,+\, q^2 \, \Ch \Vir(\tfrac{4}{5},\tfrac{1}{15}) \\
&\quad \quad \quad \times \bigg( \Ch \Vir(\tfrac{7}{10},\tfrac{1}{10}) 
\left( \Ch \Vir(\tfrac{1}{2},\tfrac{1}{2}) {\color{myblue} \,-\, 1} \right)
{\color{myred}\,-\, \left( q \,+\, q^2 \,+\, 2 q^3 \,+\, 3 q^4 \,+\, 5 q^5 
\,+\, 7 q^6 \,+\, 12 q^7 \,+\, 17 q^8 \,+\, \ldots \right)} \bigg) \\
&\quad \quad {\color{mygreen}\,-\, \left( q^2 \,+\, q^3 \,+\, 4 q^4 
\,+\, 8 q^5 \,+\, 17 q^6 \,+\, 31 q^7 \,+\, 58 q^8 \,+\, \ldots \right) } \bigg]
\Ch L(2\bL_0 + 2 \bL_1 + 7\bd) \\
&\quad \,+\, \bigg[ \Ch \Vir(\tfrac{4}{5},3) \\
&\quad \quad \quad \times \bigg( \Ch \Vir(\tfrac{7}{10},\tfrac{3}{2}) 
\left( \Ch \Vir(\tfrac{1}{2},\tfrac{1}{2}) {\color{myblue} \,-\, 1} \right)
{\color{myred}\,-\, \left( q \,+\, q^2 \,+\, 2 q^3 \,+\, 3 q^4 \,+\, 5 q^5 
\,+\, 7 q^6 \,+\, 12 q^7 \,+\, 16 q^8 \,+\, \ldots \right)} \bigg) \\
&\quad \quad \,+\, q^4 \, \Ch \Vir(\tfrac{4}{5},\tfrac{2}{5}) \\
&\quad \quad \quad \times \bigg( \Ch \Vir(\tfrac{7}{10},\tfrac{1}{10}) 
\left( \Ch \Vir(\tfrac{1}{2},\tfrac{1}{2}) {\color{myblue} \,-\, 1} \right)
{\color{myred} \,-\, \left( q \,+\, q^2 \,+\, 2 q^3 \,+\, 3 q^4 \,+\, 5 q^5 
\,+\, 7 q^6 \,+\, 12 q^7 \,+\, 17 q^8 \,+\, \ldots \right)} \bigg) \\
&\quad \quad {\color{mygreen}\,-\, \left( q^2 \,+\, 2 q^3 \,+\, 4 q^4 
\,+\, 7 q^5 \,+\, 14 q^6 \,+\, 24 q^7 \,+\, 42 q^8 \,+\, \ldots \right) } \bigg]
 \Ch L(4\bL_1 + 6\bd) \, .
\end{aligned}
\end{align}
\end{footnotesize}\noindent
On level 4 each of the three affine modules appears with two different towers of
coset Virasoro algebras. As we have seen explicitly in section \ref{sec:3.2}
there are thus six MTGs. The subtractions associated to
\beq
{\color{myred} L \otimes (\wedge^3 L) } 
\oplus {\color{myblue} L \otimes L \otimes \mJ_2}
\eeq
can be clearly associated with these towers. These are the {\color{myblue}blue}
and {\color{myred}red} terms in \eqref{eq:Ch4Long}. The {\color{mygreen}green}
subtractions, however, that come from
\beq
{\color{mygreen} S^2(\wedge^2 L) \,-\, \wedge^4 L \oplus (L \otimes \mJ_3) \cap 
\left( S^2(\wedge^2 L) \,-\, \wedge^4 L \right)  
\,=\, S^2(\wedge^2 L) \oplus S^2(\mJ_2) \oplus L(4\bL_1 + 2 \bd) 
\,-\, \wedge^4 L }
\eeq
cannot be allocated to the different Virasoro towers but only the different
affine modules. Expanding all the coset Virasoro characters in 
\eqref{eq:Ch4Long} yields
\begin{align}
\begin{aligned}\label{eq:Ch4}
\Ch \mF^{(4)} &\,=\, \left( q^3 \,+\, 4 q^4 \,+\, 9 q^5 \,+\, 20 q^6 
\,+\, 41 q^7 \,+\, 78 q^8 \,+\, \ldots \right) 
\Ch L(4\bL_0 + 6\bd) \\
&\quad \,+\, \left( q^3 \,+\, 3 q^4 \,+\, 8 q^5 \,+\, 19 q^6 \,+\, 39 q^7 
\,+\, 77 q^8 \,+\, \ldots \right) 
\Ch L(2\bL_0 + 2 \bL_1 + 7 \bd) \\
&\quad \,+\, \left( q^4 \,+\, 4 q^5 \,+\, 9 q^6 \,+\, 20 q^7 \,+\, 41 q^8 
\,+\, 78 q^9 \,+\, \ldots \right) 
\Ch L(4\bL_1 + 6\bd) \, .
\end{aligned}
\end{align}
The first line describes the level 4 singlet MTGs 
$\color{myred}\Psi_{2,\bE,0}^{\otimes (4)}$ 
and $\color{myblue}\bar\Psi_{2,\bE,0}^{\otimes (4)}$ (notice the bar on the 
{\color{myblue}blue} MTG). Since the first term in the respective $q$-series is
$q^3$ we obtain the associated highest weight singlet in $\mF^{(4)}$ by acting
with coset Virasoro tower operators of degree 3 on 
$\Psi_{2,\bE,0}^{\otimes (4)}$ and  $\bar\Psi_{2,\bE,0}^{\otimes (4)}$.
However, the coefficient of $q^3$ in the first line of
\eqref{eq:Ch4} is 1. Thus either 
\beq
\cJ_4 \, \left( \cLt{4}{\{-1,-1,-1\}} \, 
{\color{myred} \Psi_{2,\bE,0}^{\otimes(4)} } \right) 
\qquad \text{or} \qquad 
\cJ_4 \,  \left( \cLt{4}{\{-1,-1,-1\}} \,  
{\color{myblue}\bar\Psi_{2,\bE,0}^{\otimes(4)}}  \right) 
\eeq
has to be equal to zero. But without doing the calculation there is no way to
tell which one it is.

Similarly, it can be explained why the triplet MTG \eqref{eq:mtg42} gives 
non-zero  contribution to $\mF^{(4)}$ already for coset Virasoro insertions of
total degree 2 and why the fiveplet MTG \eqref{eq:mtg45} needs coset Virasoro
insertions of total degree 4.

One goal of our future research is to extend and formalize this kind of analysis
and develop it into an algorithm that turns \eqref{eq:F} into a minimal set of
affine tensor ground states that are bijectively mapped to $\mF$ by $\cJ_\ell$.
For the KMA $\mF$ the limiting factor in this approach is the concrete knowledge
of the Serre ideal $\mJ$. The Free Lie Algebra on the other hand can be
constructed to any level. Furthermore, we must understand how to associate the
subtractions with the coset Virasoro towers for levels $\ge 4$. 

Ultimately, the aim is to understand the observations above directly in the DDF
langauge without having to rely on the tradtitional results \eqref{eq:F}.
This will hopefully enable us to get an explicit description of $\mF$ to all
levels.

\appendix
\section{Commuting higher levels}\label{app:example}
In this appendix we provide a concrete example
that $[\mF^{(k)} , \mF^{(\ell -k)}]$ in general is a {\em proper}
subspace of $\mF^{(\ell)}$ for $k > 1$. 
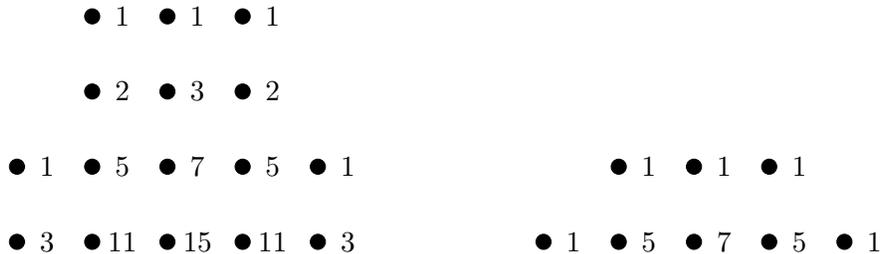
\begin{figure}[H]
\vspace{.5cm}
\begin{tikzpicture}
\draw [fill] (-5,0) circle (.1);
\draw [fill] (-4,0) circle (.1);
\draw [fill] (-3,0) circle (.1);
\draw [fill] (-5,-1) circle (.1);
\draw [fill] (-4,-1) circle (.1);
\draw [fill] (-3,-1) circle (.1);
\draw [fill] (-6,-2) circle (.1);
\draw [fill] (-5,-2) circle (.1);
\draw [fill] (-4,-2) circle (.1);
\draw [fill] (-3,-2) circle (.1);
\draw [fill] (-2,-2) circle (.1);
\draw [fill] (-6,-3) circle (.1);
\draw [fill] (-5,-3) circle (.1);
\draw [fill] (-4,-3) circle (.1);
\draw [fill] (-3,-3) circle (.1);
\draw [fill] (-2,-3) circle (.1);
\node at (-4.6,0) {1};
\node at (-3.6,0) {1};
\node at (-2.6,0) {1};
\node at (-4.6,-1) {2};
\node at (-3.6,-1) {3};
\node at (-2.6,-1) {2};
\node at (-5.6,-2) {1};
\node at (-4.6,-2) {5};
\node at (-3.6,-2) {7};
\node at (-2.6,-2) {5};
\node at (-1.6,-2) {1};
\node at (-5.6,-3) {3};
\node at (-4.6,-3) {11};
\node at (-3.6,-3) {15};
\node at (-2.6,-3) {11};
\node at (-1.6,-3) {3};
%
%
\draw [fill] (2,-2) circle (.1);
\draw [fill] (3,-2) circle (.1);
\draw [fill] (4,-2) circle (.1);
\draw [fill] (1,-3) circle (.1);
\draw [fill] (2,-3) circle (.1);
\draw [fill] (3,-3) circle (.1);
\draw [fill] (4,-3) circle (.1);
\draw [fill] (5,-3) circle (.1);
\node at (2.4,-2) {1};
\node at (3.4,-2) {1};
\node at (4.4,-2) {1};
\node at (1.4,-3) {1};
\node at (2.4,-3) {5};
\node at (3.4,-3) {7};
\node at (4.4,-3) {5};
\node at (5.4,-3) {1};
\end{tikzpicture}
\vspace{.5cm}
\caption{Partial visualization of $\mF^{(2)}$ (left) and $\mF^{(4)}$ (right). 
The top right root of the left character is $[-2,-2,-1]$ and the top root of the
right character is $[-4,-4,-3]$. The numbers in the figure indicate the 
multiplicities.}
\end{figure}\noindent
Recall that $\mF$ has a 
basis in terms of standard multi-commutators, {\em i.e.}
\beq
\mF^{(\ell)} \,=\, \mathrm{span} \{ f_{i_1 \ldots i_n} \ | \ \text{with } 
\ell \text{ generators} \ f_{-1} \ \text{and} \ n \ge \ell \} \, ,
\eeq
which implies that the inclusion 
$[\mF^{(k)}, \mF^{(\ell-k)}] \,\subseteq \, \mF^{(\ell)}$ (for all $k$) is
obvious. To illustrate that for $k>1$
$[\mF^{(k)}, \mF^{(\ell-k)}] \,\subseteq \, \mF^{(\ell)}$ is in general
smaller than $ \mF^{(\ell)}$ we employ the DDF language, as the use of multiple
commutators would entail rather cumbersome expressions. To this aim
we consider the root space of $-[4,5,4]$ to obtain all DDF states 
that can be obtained by commuting level 2 states. With the help of 
\cite{DDFPackage} we find that the level 2 states up to depth 3 are
\begin{align}
\begin{aligned}
-[2,2,1]\colon \quad &&\varphi_{2,\bD,1}^{(2)} 
&\,=\, \ket{\ba_1^{(2)}} \, , \\
-[2,2,2]\colon \quad &&\varphi_{2,\bD,0}^{(2)} 
&\,=\, \Az_{-1} \ket{\ba_0^{(2)}} \, , \\
-[2,3,2]\colon \quad &&\varphi_{3,\bD,1}^{(2)} 
&\,=\, \Az_{-2} \ket{\ba_1^{(2)}} \, , \\
&&\tilde{\varphi}_{3,\bD,1}^{(2)} 
&\,=\, 2 \sqrt{2} \ \Az_{-2} \ket{\ba_1^{(2)}} 
\,-\, 3 \, \Bz_{-2} \ket{\ba_1^{(2)}}
\,-\, 2 \, \Az_{-1} \Az_{-1} \ket{\ba_1^{(2)}} \, , \\
-[2,3,3]\colon \quad &&\varphi_{3,\bE,0}^{(2)} 
&\,=\, \Az_{-2} \Az_{-1} \ket{\ba_0^{(2)}} \, , \\
&&\varphi_{3,\bD,0}^{(2)} &\,=\, \sqrt{2} \ \Az_{-3} \ket{\ba_0^{(2)}} 
\,-\, 3 \, \Az_{-2} \Az_{-1} \ket{\ba_0^{(2)}} 
\,+\, \sqrt{2} \ \Az_{-1} \Az_{-1} \Az_{-1} \ket{\ba_0^{(2)}} \, , \\
&&\tilde{\varphi}_{3,\bD,0}^{(2)} &\,=\, \Az_{-3} \ket{\ba_0^{(2)}} 
\,+\, 3 \, \Az_{-1} \Bz_{-2} \ket{\ba_0^{(2)}} 
\,-\, 4 \, \Az_{-1} \Az_{-1} \Az_{-1} \ket{\ba_0^{(2)}} \, .
\end{aligned}
\end{align}
From these seven states we can form five commutators that produce
DDF states in the root space of $- [4,5,6]$, namely
\begin{align}
\begin{aligned}
\psi_1 &\,=\, \left[ \varphi_{2,\bD,1}^{(2)}\,,\, \varphi_{3,\bE,0}^{(2)} \right] \, , \\
\psi_2 &\,=\, \left[ \varphi_{2,\bD,1}^{(2)}\,,\, \varphi_{3,\bD,0}^{(2)} \right] \, , \\
\psi_3 &\,=\, \left[ \varphi_{2,\bD,1}^{(2)}\,,\, \tilde{\varphi}_{3,\bD,0}^{(2)} \right] \, , \\
\psi_4 &\,=\, \left[ \varphi_{2,\bD,0}^{(2)}\,,\, \varphi_{3,\bD,1}^{(2)} \right] \, , \\
\psi_5 &\,=\, \left[ \varphi_{2,\bD,0}^{(2)}\,,\, \tilde{\varphi}_{3,\bD,1}^{(2)} \right] \, .
\end{aligned}
\end{align}
\cite{DDFPackage} then tells us that of these five states only four are 
linearly independent. In particular we find
\beq
\psi_1 \,+\, \frac{1}{3} \psi_2 \,+\, \frac{1}{2\sqrt{2}} \psi_3 
\,+\, \psi_4 \,-\, \frac{1}{2\sqrt{2}} \psi_5 \,=\, 0 \, .
\eeq
The root space of $-[4,5,4]$ however, has dimension 5. So we are missing
one state. Since there are no more commutators of level 2 states available that
would give level 4 states in the root space of $-[4,5,4]$ we conclude that
$[\mF^{(2)}, \mF^{(2)}]$ is a proper subset of $\mF^{(4)}$.

\section{Matching characters}
\label{app:delta}

In this appendix we show that the characters of both sides of eqn. 
\eqref{eq:prodL} agree. In particular this fixes all $\bd$-shifts in
\eqref{eq:prodL}. Our starting point is eqn. (4.1) of \cite{KW1}. In our
notation it reads
\begin{gather}
\begin{gathered}
\Ch L(\bL_0) \cdot \Ch L(m\bL_0+2n\bL_1) \\
\,=\, \frac{1}{\varphi(q)} \sum_{k \in K_{m,n}} 
\left( f_k^{(\ell-1,2n)}-f_{2n+1-k}^{(\ell-1,2n)} \right) 
\Ch L\left((m+1+2k)\bL_0 + 2(n-k)\bL_1\right)
\end{gathered}
\end{gather}
with $\ell = m+2n+1$ and
\beq
f_k^{(a,b)} \,=\, 
\sum_{j \in \mathbb{Z}} q^{(a+2)(a+3)j^2 + ((b+1) + 2k(a+2))j + k^2} \, .
\eeq
In \cite{KW1} this equation was derived from the Weyl-Kac character
formula and an identity for the product of $\Theta$-functions 
(see also \cite{KacPeterson}). 
Recalling the coset Virasoro characters \eqref{eq:VirCh}, it is then not hard to see that
\begin{align}
\begin{aligned}
&\quad \frac{1}{\varphi(q)}  \left( f_k^{(\ell-1,2n)}
\,-\,f_{2n+1-k}^{(\ell-1,2n)} \right) 
\,-\, q^{k^2} \chi^{\ell+2,\ell+1}_{2n+1,2n+1-2k}(q) \\
&\,=\, \frac{1}{\varphi(q)}  \sum_{j \in \mathbb{Z}}
\bigg( q^{k^2 + j^2(\ell+2)(\ell+1)+j(1+2n+2k(\ell+1))}  
 \,-\,  q^{ (1-k+2n)^2 + j^2(\ell+2)(\ell+1) + j (1+2n+2(2n+1-k)(\ell+1))}   \\
&\quad \quad \quad \quad \quad \quad
 \,-\,  q^{k^2 + j^2(\ell+2)(\ell+1) + j(1+2n+2k(\ell+1)) }
\,+\,  q^{k^2 + (1+2n+j(\ell+1))(1+2n-2k+j(\ell+2)}  \bigg) \\
&\,=\, 0 \, .
\end{aligned}
\end{align}
Because $\chi^{p,p'}_{r,s}(q) = \chi^{p,p'}_{p'-r,p-s}(q)$ we also find 
\beq
\frac{1}{\varphi(q)} 
\left( f_k^{(\ell-1,2n)}\,-\,f_{2n+1-k}^{(\ell-1,2n)} \right) 
\,=\, q^{k^2} \chi^{\ell+2,\ell+1}_{m+1,m+2+2k}(q)  \, .
\eeq
With $k \in K_{m,n}$ as in \eqref{eq:Kmn} we once again define
$r \equiv r_{m,n,k}$ and $s \equiv s_{m,n,k}$ as in \eqref{eq:rs} and 
subsequently obtain
\begin{gather}
\begin{gathered}
\Ch L(\bL_0) \cdot \Ch L(m\bL_0+2n\bL_1) \\[2mm]
\,=\,  \sum_{k \in K_{m,n}} q^{k^2} \chi_{r,s}^{\ell+2,\ell+1}(q) \ 
\Ch L\left((m+1+2k)\bL_0 + 2(n-k)\bL_1\right) \, .
\end{gathered}
\end{gather}
Using $\Ch L(\bL + l \bd) = q^{-l} \,  \Ch L(\bL)$ and multiplying both sides
with $q^{-2-l}$ we arrive at the $\bd$-shift proposed in \eqref{eq:prodL}.

\end{document}